\newcommand{\MER}{\textsc{MER}}
\newcommand{\MEP}{\textsc{MEP}}
\newcommand{\CH}{\textsc{CH}}
\DeclareMathOperator{\opt}{opt}
\DeclareMathOperator{\sol}{sol}
\DeclareMathOperator{\area}{area}
\DeclareMathOperator{\poly}{poly}
\newtheorem{definition}{Definition}
\newtheorem{theorem}{Theorem}
\newtheorem{lemma}{Lemma}
\newtheorem{corollary}{Corollary}
\newtheorem{remark}{Remark}
\newtheorem*{rep@theorem}{\rep@title}
\newcommand{\newreptheorem}[2]{%
	\newenvironment{rep#1}[1]{%
		\def\rep@title{#2 \ref{##1}}%
		\begin{rep@theorem}}%
		{\end{rep@theorem}}}
\newcommand{\algmargin}{\the\ALG@thistlm}
\newlength{\whilewidth}
\algnewcommand{\parState}[1]{\State%
	\parbox[t]{\dimexpr\linewidth-\algmargin}{\strut #1\strut}}
\journal{Information Processing Letters}
\begin{document}

\begin{frontmatter}

\title{Minimum Enclosing Rectangle with Outliers}

\author{Zhengyang Guo\corref{cor1}}
\ead{GUOZ0015@e.ntu.edu.sg}

\author{Yi Li}
\ead{yili@ntu.edu.sg}

\address{School of Physical and Mathematical Sciences, Nanyang Technological University, Singapore}
           
\cortext[cor1]{Corresponding author}

\begin{abstract}
We study the problem of minimum enclosing rectangle with outliers, which asks to find, for a given set of $n$ planar points, a rectangle with minimum area that encloses at least $(n-t)$ points. The uncovered points are regarded as outliers. We present an exact algorithm with $O(kt^3+ktn+n^2\log n)$ runtime, assuming that no three points lie on the same line. Here $k$ denotes the number of points on the first $(t+1)$ convex layers. We further propose a sampling algorithm with runtime $O(n+\mbox{poly}(\log{n}, t, 1/\epsilon))$, which with high probability finds a rectangle covering at least $(1-\epsilon)(n-t)$ points with at most the exact optimal area.
\end{abstract}

\begin{keyword}
shape fitting \sep outlier detection \sep approximation algorithm \sep time complexity \sep computational geometry
\end{keyword}

\end{frontmatter}


\section{Introduction}\label{sec:intro}
In this work, we consider the problem of partial minimum enclosing rectangle, which is a generalization of classic minimum enclosing rectangle to the cases where there are outliers. Outliers have attracted increasing attention in the studies of computational geometry, and a recent trend is to combine shape fitting tasks (see \cite{har2004shape} for what these task are) and outlier detection. Examples include the projective clustering \cite{mishra2005sublinear}, unit disk cover \cite{gandhi2004approximation,ghasemalizadeh2012improved}, $k$-center/means/median clustering \cite{malkomes2015fast,guha2017distributed}, subspace approximation \cite{deshpande2020subspace}, minimum enclosing ball \cite{ding2018solving} and subspace clustering \cite{soltanolkotabi2012geometric}. As the geometric shapes are sensitive to outliers, removing outliers can sometimes greatly improve the quality of the output. 

The classic minimum enclosing rectangle (\MER) or parallelogram (\MEP) is of interest in digital signal processing \cite{schwarz1994finding,o1985finding} and computer graphics \cite{chen1997determining,barequet2001efficiently,chan2001determination}. It works as a preprocessing to obtain a bounding box for the input point cloud \cite{o1985finding}. For 2D points, the objective is to find a rectangle \cite{das2005smallest} or parallelogram \cite{schwarz1994finding} of minimum area that circumscribes all or most of the points. For 3D points, the aim is to find a hyper-rectangle \cite{mount2000quantile} or parallelepiped \cite{vivien2004minimal} of minimum volume.

In \cite{kaplan2019finding}, Kaplan et al.\ consider a relatively restricted case where the rectangle is axis parallel and gives an $O(n^{5/2}\log^2{n})$ exact algorithm. In \cite{schwarz1994finding}, Schwarz gives an exact algorithm to find a parallelogram of minimum area that encloses a convex polygon. The algorithm is linear with respect to the number of vertices of the convex polygon. Together with the $O(n\log{N})$-time algorithms in \cite{kirkpatrick1986ultimate} or \cite{chan1996optimal}, where $N$ denotes the convex hull size, we can find the \MEP\ of $n$ planar points in $O(n\log N)$ time by finding the convex hull first and then its minimum enclosing parallelogram.

For the outlier cases, there was an $O(n+t^2n)$-time algorithms for \MER~\cite{segal1998enclosing}, under the assumption that the rectangle is axis-parallel. The runtime was later improved to $O(n+t^3\log^{2}{n})$ for the cases $t < n/\log^2 n$~ \cite{mahapatra2011k}. Finding \MER\ of arbitrary orientation leads to a significantly higher runtime of $O(n^{2}t^{2}+n^{2}t \log{n})$ for $t<n/2$~\cite{das2005smallest}.

\subsection{Notations and Problem Formulation}\label{formulation}
Let $X$ be the input set of $n$ planar points and $t$ be the number of outliers. Besides, $N$ denotes the convex hull size of $X$ and $k=k(X,t)$ denotes the number of points on the first $(t+1)$ convex layers (see \cite{dalal2004counting} for the definition of convex layer, we also restate it in Appendix~\ref{sec::appendix}). We assume $t<n/2$ as in general there are more inliers than outliers. The main problem is defined below.

\begin{definition}[Minimum Enclosing Rectangle (Parallelogram) with Outliers]
Given a set $X$ of $n$ planar points and an integer parameter $t$, the task is to find a rectangle (parallelogram) with the minimum area that covers at least $(n-t)$ points.
\label{parallelogram formulation}
\end{definition}

We use $\MER(X,t)$ to denote the problem itself, $\sol^\ast(X,t)$ to denote the optimal solution and $\opt(X,t)$ to denote the optimal area. Besides, we use uppercase letters $X$, $L$, $R$, $J$, $H$, $G$, $M$, $\Lambda$ and $\Gamma$ to indicate a collection of objects (such as sets, lists, arrays) in this paper. For a list $L$, we use $L_i$ to denote its $i$-th element. For three distinct points $O,A,B$, we use $\angle{AOB}$ (in radians) to denote the clockwise angle from the \emph{ray} $OA$ to the \emph{ray} $OB$.

\begin{figure}[!t]
	\begin{minipage}{6.1cm}
	\centering
	\begin{tikzpicture}[scale=0.5]
	\def\centerarc[#1](#2)(#3:#4:#5)
	{ \draw[#1] ($(#2)+({#5*cos(#3)},{#5*sin(#3)})$) arc (#3:#4:#5); }
	
	\draw[dashed, very thick] (0,0) -- (6.6,0) -- (6.6, 4) -- (0, 4) -- cycle;
	
	\draw node[fill,circle,minimum width=4pt, inner sep=0pt, label={[label distance=0.01cm]270:$A_8$}](A8) at (2.7, 0) {};
	\draw node[fill,circle,minimum width=4pt, inner sep=0pt, label={[label distance=0.2cm]270:$A_1$}](A1) at (4.5, 0.3) {};
	\draw node[fill,circle,minimum width=4pt, inner sep=0pt, label={[label distance=0cm]0:$A_2$}](A2) at (6, 1.2) {};
	\draw node[fill,circle,minimum width=4pt, inner sep=0pt, label={[label distance=0cm]0:$A_3$}](A3) at (6.6, 3) {};
	\draw node[fill,circle,minimum width=4pt, inner sep=0pt, label={[label distance=0.01cm]90:$A_4$}](A4) at (5.1, 4) {};
	\draw node[fill,circle,minimum width=4pt, inner sep=0pt, label={[label distance=0.01cm]90:$A_5$}](A5) at (2, 4) {};
	\draw node[fill,circle,minimum width=4pt, inner sep=0pt, label={[label distance=0cm]180:$A_6$}](A6) at (0,2.7) {};
	\draw node[fill,circle,minimum width=4pt, inner sep=0pt, label={[label distance=-0.3cm]210:$A_7$}](A7) at (0.5,0.9) {};
	
	\draw node[fill,circle,minimum width=4pt, inner sep=0pt] at (3.2, 3.2) {};
	\draw node[fill,circle,minimum width=4pt, inner sep=0pt] at (3.7, 2) {};
	\draw node[fill,circle,minimum width=4pt, inner sep=0pt] at (5, 2.8) {};
	\draw node[fill,circle,minimum width=4pt, inner sep=0pt] at (1.8, 2.3) {};
	\draw node[fill,circle,minimum width=4pt, inner sep=0pt] at (1.3, 1.3) {};
	\draw node[fill,circle,minimum width=4pt, inner sep=0pt] at (2.9, 0.7) {};
	\draw node[fill,circle,minimum width=4pt, inner sep=0pt] at (5, 1.1) {};
	
	\draw[very thick] (A8)--(A1)--(A2)--(A3)--(A4)--(A5)--(A6)--(A7)--(A8);
	
	\end{tikzpicture}
	\caption{Relative position of the \MER\ and the convex hull of a point set. The vertex hull is $A_1A_2\cdots A_8$. The top side of the \MER\ contains $A_4A_5$ and the bottom side contains $A_8$. The left side contains $A_6$ and the right side $A_3$.}
	\label{fig:collinear}
	\end{minipage}
	\hfill
	\begin{minipage}{5.5cm}
		\centering
		\begin{tikzpicture}[scale=0.65]
		\makeatletter 
		\pgfdeclarepatternformonly[\LineSpace,\tikz@pattern@color]{my north east lines}{\pgfqpoint{-1pt}{-1pt}}{\pgfqpoint{\LineSpace}{\LineSpace}}{\pgfqpoint{\LineSpace}{\LineSpace}}%
		{
			\pgfsetcolor{\tikz@pattern@color} 
			\pgfsetlinewidth{0.4pt}
			\pgfpathmoveto{\pgfqpoint{0pt}{0pt}}
			\pgfpathlineto{\pgfqpoint{\LineSpace + 0.1pt}{\LineSpace + 0.1pt}}
			\pgfusepath{stroke}
		}
		\makeatother 
		\newdimen\LineSpace
		\tikzset{
			line space/.code={\LineSpace=#1},
			line space=3pt
		}
	
		\fill[gray,pattern=my north east lines,line space = 8pt,dashed] (-0.5,-0.3) -- (5,-0.3) -- (5,5) -- (-0.5,5) -- cycle;
		
		\draw [fill,white] (1.45, -0.3) -- (9/17, 115/34)--(5, 4.5) -- (5,-0.3) -- cycle;
		\draw [thick] (1.45, -0.3) -- (9/17, 115/34) -- (5,4.5);
		
		\draw node[fill,circle,minimum width=4pt, inner sep=0pt, label={[label distance=0.03cm]280:$P_1$}] at (1, 3.5) {};
		\draw node[fill,circle,minimum width=4pt, inner sep=0pt, label={[label distance=0.05cm]270:$P_2$}] at (3, 4) {};
		\draw node[fill,circle,minimum width=4pt, inner sep=0pt, label={[label distance=0.05cm]0:$P_4$}] at (1, 1.5) {};
		
		\draw node at (3.5,1.5) {Enclosed};
		\draw node at (1.7,4.3) {Excluded};
		
		\end{tikzpicture}
		\caption{Enclosed and excluded areas. The boundaries are made up by three parts: the line $P_1P_2$ and the perpendicular line through $P_4$. The solid line boundaries are considered a part of the enclosed area.}
		\label{fig:sides}
	\end{minipage}
\end{figure}

\subsection{Our Contributions}
We give an exact algorithm to $\MER(X,t)$.  When $t<n/2$, its time complexity is $O(nt^3+n^2t+n^2\log{n})$,  better than $O(n^{2}t^{2}+n^2t\log{n})$ in~\cite{das2005smallest}. Both complexities are computed for the cases where there are no three collinear points. The major difference between our work and~\cite{das2005smallest} is that we define the notion of \emph{valid pairs} and prove there are at most $O(nt)$ of them. Using valid pairs, we can locate one side of the rectangle.
For recent results on the minimum enclosing rectangle, we refer the readers to Table~\ref{tab:summary}.

\begin{theorem}
	\label{thm:main1}
	Given a set of $n$ points, we can find its \MER\ with $t$ outliers in time  $O(kt^3+ktn +n^{2}\log{n})$, where $k=k(X,t)$ is the number of points on the first $(t+1)$ convex layers..
\end{theorem}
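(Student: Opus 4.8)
The plan is to reduce $\MER(X,t)$ to a small family of subproblems, each fixing the orientation of the rectangle together with the side flush against a hull edge, and to show that only $O(kt)$ orientations, each with $O(t^2)$ configurations, need be examined. I would begin with the structural backbone. For any outlier set $O$ with $|O|\le t$, the minimum-area rectangle enclosing the inliers $X\setminus O$ has one side collinear with an edge of $\mathrm{conv}(X\setminus O)$; this is the classical flush-edge property of the minimum enclosing rectangle of a convex polygon, and under the general-position assumption this edge is spanned by two vertices. Hence $\sol^\ast(X,t)$ is flush with a segment $pq$ whose endpoints are vertices of an inlier hull. I would then invoke the convex-layer property: a point on the $j$-th layer can become a hull vertex only after at least $j-1$ points are removed, so deleting $t$ points never exposes anything beyond the $(t+1)$-st layer. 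Both endpoints of the flush edge therefore lie among the $k$ candidate points.

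Next I would introduce the valid pairs. Call a pair $(p,q)$ of candidate points \emph{valid} when the line through them carries at most $t$ points of $X$ strictly on one side; because a flush base edge of an inlier hull leaves all inliers on one side, the optimal flush edge is a valid pair. To bound their number, observe that at most $t$ total points on one side forces at most $t$ \emph{candidate} points on that side, so every valid pair is a $(\le t)$-edge of the $k$-point candidate set; the standard $(\le t)$-edge bound then gives $O(kt)$ valid pairs. They are enumerated from angular orders, and the order and counting information needed across all orientations is what I would precompute globally (equivalently, via the dual line arrangement), which contributes the $O(n^2\log n)$ term.

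For each valid pair I would solve the orientation-and-base-fixed subproblem. Rotating so that $pq$ is horizontal with its supporting line as the bottom side, the $b\le t$ points strictly below become forced outliers, leaving a budget $t'=t-b$ of exclusions among the points on or above the line; the minimizing rectangle touches points on its top, left and right sides, so its excluded points are the topmost, leftmost and rightmost among the above-points. Enumerating how $t'$ splits among these three sides yields $O(t^2)$ candidate rectangles, each of area computable in $O(1)$ from the precomputed orders after an $O(n)$ setup that locates the base and counts $b$. Over the $O(kt)$ valid pairs this costs $O(kt^3+ktn)$, and adding the preprocessing gives the claimed $O(kt^3+ktn+n^2\log n)$; returning the smallest area found yields $\sol^\ast(X,t)$.

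The step I expect to be the main obstacle is the fixed-base subproblem: I must prove that an optimal rectangle never needs to drop a non-extreme point, and that exclusions interacting at the corners (a point lying both above the top side and beyond a lateral side should be charged only once) neither inflate the enumeration past $O(t^2)$ nor destroy the monotone dependence of the area on the split. Making this enumeration simultaneously exhaustive, double-count-free, and $O(1)$ per evaluation is the delicate part; by comparison, the valid-pair count and the confinement to the $k$ candidates follow from the established $(\le t)$-edge and convex-layer arguments.
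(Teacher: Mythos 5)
Your proposal follows essentially the same route as the paper: the flush-edge structural lemma restricts the base side to a \emph{valid pair} of points lying on the first $(t+1)$ convex layers, the number of valid pairs is bounded by $O(kt)$ (you via the standard $(\le t)$-edge bound, the paper via an equivalent explicit rotating-ray counting argument), and each valid pair is resolved in $O(n+t^2)$ time by enumerating among the $(t+1)$ extreme candidates for the three remaining sides, giving $O(kt^3+ktn+n^2\log n)$ overall. The corner-interaction difficulty you flag is genuine but is exactly what the paper's Algorithm~\ref{alg:para} handles, by decrementing or incrementing the exclusion counter by $1$ or $2$ depending on whether consecutive candidates are aligned perpendicular or parallel to the base.
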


Though $k$ can be as large as $n$ in the worst case, in general it is much smaller. See Remark~\ref{worst v.s. average} for a discussion of this. When $k, t\ll n$, the $n^{2}\log{n}$ term becomes dominant in the time complexity and the algorithm is not adequately efficient for implementation. We therefore sample the point set uniformly at random and show an $O(n+\poly(\log{n}, t, 1/\epsilon))$-time approximation algorithm.

\begin{theorem}
	\label{thm:main2}
	There is a sampling algorithm which, given a set $X$ of $n$ points, with probability at least $(1-3/n)$, finds a rectangle of area at most $\opt(X,t)$ such that the number of enclosed points is between $(n-t)(1-\epsilon)$ and $(n-t+1)$, in time $O(n+\poly(\log{n}, t, 1/\epsilon))$.
\end{theorem}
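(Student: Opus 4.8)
The plan is to reduce the problem on $X$ to the \emph{same} problem on a small uniform random sample $S\subseteq X$, solve the sample instance exactly with the algorithm of Theorem~\ref{thm:main1}, and then transfer both guarantees back to $X$. The essential difficulty is that the rectangle returned depends on the random sample, so a single Chernoff bound is not enough; I need the coverage estimate to hold \emph{simultaneously} for every candidate rectangle, which is exactly what an $\epsilon$-approximation provides. Since $S\subseteq X$ inherits the no-three-collinear assumption, Theorem~\ref{thm:main1} is applicable to the sample instance.

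First I would fix the target density $p=(n-t)/n\ge 1/2$ (using $t<n/2$) and set $\eta=\Theta(\epsilon p)=\Theta(\epsilon)$. Because the range space of planar rectangles of arbitrary orientation has constant VC dimension, the standard $\epsilon$-approximation theorem lets me draw a sample $S$ of size $s=O(\eta^{-2}\log n)=O(\epsilon^{-2}\log n)$ such that, with probability at least $1-3/n$ (by choosing the constant in $s$), $S$ is an $\eta$-approximation: for every rectangle $R$ one has $\bigl||R\cap X|/n-|R\cap S|/s\bigr|\le\eta$. I then condition on this event. Setting the sample outlier budget to $t_S=\lceil(t/n+\eta)s\rceil<s$ and letting $R_S=\sol^\ast(S,t_S)$, the area guarantee follows from feasibility of the true optimum: the $X$-density of $\sol^\ast(X,t)$ is at least $p$, so by the $\eta$-approximation it covers at least $(p-\eta)s\ge s-t_S$ sample points, hence it is feasible for the sample instance and $\area(R_S)=\opt(S,t_S)\le\opt(X,t)$.

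For the coverage lower bound I would push $R_S$ back through the same $\eta$-approximation. As a feasible sample solution, $R_S$ covers at least $s-t_S\ge(p-\eta)s-1$ sample points, so its $X$-density is at least $p-2\eta-1/s$, giving $|R_S\cap X|\ge(n-t)-2\eta n-n/s\ge(1-\epsilon)(n-t)$; here $2\eta n\le\tfrac12\epsilon(n-t)$ by the choice of $\eta$, and $n/s=O(\epsilon^2 n/\log n)\le\tfrac12\epsilon(n-t)$ since $s=\Omega(\epsilon^{-2}\log n)$ and $\epsilon\le 1$. The coverage \emph{upper} bound, by contrast, does not come from the sample at all but from the area guarantee together with general position, and this is the step I expect to require the most care. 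I would isolate a shrinking lemma: if no three points of $X$ are collinear, then any rectangle of area at most $\opt(X,t)$ encloses at most $n-t+1$ points of $X$. Indeed, if some rectangle $R$ had area $\le\opt(X,t)$ and enclosed $\ge n-t+2$ points, then area strictly below $\opt(X,t)$ would already contradict optimality, so $\area(R)=\opt(X,t)$ and $R$ is itself optimal for $\MER(X,t)$; each side supports at most two of its points (no three collinear), so pushing inward the side with the fewest supporting points drops at most two points while strictly decreasing the area, leaving $\ge n-t$ points enclosed and contradicting minimality. Applying the lemma to $R_S$ yields $|R_S\cap X|\le n-t+1$.

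Finally, for the running time, reading the input and drawing $S$ costs $O(n)$; the sample instance has $s=O(\epsilon^{-2}\log n)$ points and $t_S<s$ outliers, so the number of points on its first $(t_S+1)$ convex layers is at most $s$ and Theorem~\ref{thm:main1} solves it in $\poly(s)=\poly(\log n,1/\epsilon)\subseteq\poly(\log n,t,1/\epsilon)$ time. The total is therefore $O(n+\poly(\log n,t,1/\epsilon))$, and conditioning on the single $\eta$-approximation event (probability at least $1-3/n$) delivers all three conclusions at once.
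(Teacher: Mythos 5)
Your proposal is correct and follows essentially the same route as the paper: draw a uniform sample of size $O(\epsilon^{-2}\log n)$, argue via the constant VC dimension of rectangles that coverage fractions transfer uniformly between $X$ and $S$, show $\sol^\ast(X,t)$ is feasible for the sample instance (giving the area bound), obtain the upper bound $n-t+1$ from the same shrinking argument as the paper's Corollary~\ref{coro:upper bound}, and solve the sample instance with Theorem~\ref{thm:main1}. The only cosmetic difference is that you invoke the standard additive $\epsilon$-approximation with $\eta=\Theta(\epsilon p)$ (valid since $p\ge 1/2$), whereas the paper uses a relative-error discrepancy bound for rectangles of area at most $\alpha$; the two are interchangeable here.
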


\begin{table}[!t]
	\footnotesize
	\centering
	\begin{tabular}{|c|c|c|c|}
		\hline
		Orientation & Paper & Restriction & Time Complexity\\
		\hline
		\multirow{5}{*}{axis-parallel} & \cite{mahapatra2011k} & $t<n/2$ & $O\left(n+t^3\log^2{t}\right)$\\
		\cline{2-4}
		& \cite{mahapatra2011k} & -- & $O\left(n^3\log^2{n}\right)$\\
		\cline{2-4}
		& \cite{mahapatra2012smallest} & -- & $O\left(n^4\right)$\\
		\cline{2-4}
		& \cite{chan2020smallest} & -- & $O\left(n^2\log{n}\right)$\\
		\cline{2-4}
		& \cite{de2016covering} & -- & $O\left(n(n-t)^2\log{n}+n\log^2{n}\right)$\\
		\hline
		\multirow{3}{*}{arbitrary} & \cite{chan2020smallest} & -- & $O\left(n^3\log{n} + n^3(n-t)/2^{\Omega(\sqrt{\log(n-t)})}\right)$\\
		\cline{2-4}
		& \cite{das2005smallest} & -- & $O\left(n^2\log{n}+nt^2(n-t)+nt(n-t)\log(n-t)\right)$\\
		\cline{2-4}
		& this work & $t<n/2$ & $O\left(n^2\log{n}+n^2t+nt^3\right)$\\
		\hline
	\end{tabular}
	\caption{}
	\label{tab:summary}
\end{table}

\section{Preliminaries}
The observation is a result from \cite{toussaint1983solving}, which concerns the relative position between the \MER\ and the convex hull. See Figure~\ref{fig:collinear} for illustration. 

\begin{lemma}[{\cite[Theorem 2.1]{toussaint1983solving}}]
\label{collinear rectangle}
For any given planar point set $X$ and its conex hull $\CH(X)$, there exists a \MER\ such that one of its four sides must contain a side of $\CH(X)$, and each of the other side must pass a vertex of $\CH(X)$.
\end{lemma}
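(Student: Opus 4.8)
The plan is to reduce the problem to a convex polygon and then analyze the enclosing area as a function of orientation on the maximal ranges of orientation over which the contact structure is fixed. First I would note that since a rectangle is convex, it covers $X$ if and only if it covers $\CH(X)$, so it suffices to prove the statement for the convex polygon $P=\CH(X)$. Next I would record the standard minimality (support-line) property: in any enclosing rectangle of minimum area, each of the four sides must touch $P$, for otherwise one could translate that side inward and strictly shrink the area. A supporting line of a convex polygon meets it either in a single vertex or along an entire edge, so this already forces every side of a minimal rectangle to pass through a vertex; it remains only to show that at least one side can be taken flush with an edge.

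Then I would set up the orientation parametrization. For $\theta\in[0,\pi/2)$ write $u(\theta)=(\cos\theta,\sin\theta)$ and consider the unique smallest rectangle with sides parallel to $u(\theta)$ and $u(\theta)^\perp$ that encloses $P$; its two side lengths are the support-widths $w_1(\theta),w_2(\theta)$ of $P$ in these perpendicular directions, and the area is $A(\theta)=w_1(\theta)\,w_2(\theta)$. As $\theta$ increases, the vertices realizing the four supporting lines stay fixed except at finitely many \emph{critical} orientations, which are exactly those at which one side becomes collinear with an edge of $P$; these partition $[0,\pi/2)$ into finitely many intervals. On one such interval, with the four contact vertices $P,Q,R,S$ fixed, I would write $w_1(\theta)=\langle P-Q,u(\theta)\rangle=\rho_1\cos(\theta-\alpha)$ and $w_2(\theta)=\langle R-S,u(\theta)^\perp\rangle=\rho_2\cos(\theta-\beta)$ with $\rho_1,\rho_2>0$, so that
\[
A(\theta)=\tfrac{1}{2}\rho_1\rho_2\bigl[\cos(\alpha-\beta)+\cos(2\theta-\alpha-\beta)\bigr].
\]

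This sinusoidal form is the crux. Differentiating, the interior critical points satisfy $2\theta-\alpha-\beta\in\{0,\pi\}$. At $2\theta-\alpha-\beta=0$ the second derivative is negative, giving a local \emph{maximum}; at $2\theta-\alpha-\beta=\pi$ it is positive, but there the value is $A=\tfrac{1}{2}\rho_1\rho_2(\cos(\alpha-\beta)-1)\le 0$, which is impossible for a genuine enclosing rectangle of positive area. Hence $A$ admits no interior local minimum on any feasible interval, so the minimum of $A$ on each interval is attained at an endpoint, i.e.\ at a critical orientation where some side is flush with an edge of $P$. Taking the global minimum over all intervals produces a \MER\ with one side containing an edge of $\CH(X)$, and by the support-line property each of the remaining three sides passes through a vertex, as claimed.

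The main obstacle is precisely the last step, ruling out an interior local minimum: the naive second-derivative test does permit such a minimum, and what rescues the argument is the observation that the candidate minimizing orientation forces the area to be non-positive and therefore infeasible. A secondary point needing care is the collinear-points caveat together with the treatment of the interval endpoints, where two contact structures meet, but these are routine once the sinusoidal form and the positivity argument are established.
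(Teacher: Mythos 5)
The paper does not prove this statement at all: it is imported verbatim as Theorem~2.1 of Toussaint's rotating-calipers paper, so there is no internal proof to compare against. Your argument is correct, and it is essentially the classical Freeman--Shapira/Toussaint proof that underlies that citation: reduce to $\CH(X)$, note that minimality forces all four sides to be supporting lines (hence each contains a vertex), parametrize by orientation, observe that on each maximal interval of constant contact structure the area is $\tfrac12\rho_1\rho_2\bigl[\cos(\alpha-\beta)+\cos(2\theta-\alpha-\beta)\bigr]$, and rule out interior local minima because the only candidate ($2\theta-\alpha-\beta=\pi$) forces $A\le 0$, equivalently forces one of the two widths $\rho_1\cos(\theta-\alpha)$, $\rho_2\cos(\theta-\beta)$ to be non-positive, which is infeasible. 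That positivity observation is indeed the crux, and you have it right. Only cosmetic points to tidy: you reuse $P$ both for the polygon and for a contact vertex; the orientation space is really a circle of circumference $\pi/2$ rather than the half-open interval $[0,\pi/2)$, which is what guarantees the global minimum sits at one of the finitely many critical orientations with no boundary case to worry about; and the fully degenerate case where $\CH(X)$ is a segment (excluded anyway by the paper's no-three-collinear assumption for $n\ge 3$) should be set aside explicitly since there $A\equiv 0$ and the interior-minimum exclusion is vacuous. None of these affects the validity of the proof.
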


\section{Algorithm}
\label{sec:timealg}
We shall analyze what properties the optimal solution must satisfy and build the algorithm along the way. We will ignore the cases where there are collinear points in the discussion of the geometric properties of our concerned problem. It is common to ignore the degenerate or corner cases, for instance, such practice was thoroughly adopted in \cite{van2000computational}. For completeness, we shall discuss in Appendix~\ref{sec:three_points} how to modify the algorithm to remove the assumption. Below is a direct generalization of Lemma~\ref{collinear rectangle} to the outlier case. 

\begin{corollary}
	\label{collinear parallelogram with outliers}
	The optimal solution $\sol^\ast(X,t)$ and the convex hull of the $(n-t)$ enclosed points have the same positional relation as described in Lemma~\ref{collinear rectangle}. 
\end{corollary}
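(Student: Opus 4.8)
The plan is to reduce the outlier problem to the outlier-free Lemma~\ref{collinear rectangle} by \emph{freezing} the set of enclosed points. First I would let $R^\ast=\sol^\ast(X,t)$ be an optimal rectangle and let $S\subseteq X$ be the set of points it encloses; by feasibility $|S|\ge n-t$. The key claim is that $R^\ast$ is a genuine \MER\ of the sub-instance $S$, that is, a minimum-area rectangle enclosing \emph{all} of $S$ with no outliers allowed. Once this is known, the outlier problem on $X$ has been turned into an ordinary minimum-enclosing-rectangle problem on $S$, to which Lemma~\ref{collinear rectangle} applies verbatim.

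To establish the claim I would compare areas in both directions. Any rectangle enclosing all of $S$ is in particular a rectangle enclosing at least $n-t$ points of $X$, so its area is at least $\opt(X,t)=\area(R^\ast)$; hence $R^\ast$ attains the minimum area among all rectangles enclosing $S$, i.e.\ $R^\ast$ is a \MER\ of $S$. Conversely, let $R'$ be the hull-aligned \MER\ of $S$ furnished by Lemma~\ref{collinear rectangle}. Since $R'$ encloses $S$ it encloses at least $n-t$ points of $X$ and is feasible for $\MER(X,t)$, so $\area(R')\ge\opt(X,t)$; combined with $\area(R')=\area(R^\ast)$ (both being \MER s of $S$) this forces $\area(R')=\opt(X,t)$, so $R'$ is itself an optimal solution. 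Lemma~\ref{collinear rectangle} applied to $S$ then tells us that $\CH(S)$ and $R'$ stand in the Toussaint relation: one side of $R'$ contains a side of $\CH(S)$ while each of the other three sides passes through a vertex of $\CH(S)$.

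The step I expect to be the main obstacle is reconciling the \emph{existence} quantifier in Lemma~\ref{collinear rectangle} with the identity of the enclosed set named in the corollary. The rectangle $R'$ is aligned with $\CH(S)$, where $S$ is the enclosed set of the original optimum $R^\ast$; but $R'$ could in principle cover a few extra points, so that its own enclosed set $S'=\mathrm{enclosed}(R')$ strictly contains $S$, and then $R'$ is aligned with $\CH(S)$ rather than with $\CH(S')$. I would resolve this by a monotonicity argument: the same two-sided area comparison shows $R'$ is also a \MER\ of $S'$, so re-applying Lemma~\ref{collinear rectangle} to $S'$ produces a further optimal hull-aligned rectangle whose enclosed set is weakly larger. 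Because the enclosed sets form an increasing chain inside the finite set $X$, the process stabilizes at an optimal rectangle that is aligned with the convex hull of \emph{its own} enclosed set, which is exactly the statement of the corollary. (Under the standing no-three-collinear assumption one expects the first application already to enclose no spurious points, but the chain argument removes any dependence on genericity.)
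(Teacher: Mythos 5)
Your proposal is correct, and it is essentially the argument the paper has in mind: the paper states this corollary without any proof, calling it a ``direct generalization'' of Lemma~\ref{collinear rectangle}, and the intended justification is precisely your observation that $\sol^\ast(X,t)$ must be a minimum enclosing rectangle of its own enclosed set $S$ (any rectangle containing $S$ contains at least $n-t$ points of $X$ and so has area at least $\opt(X,t)$), after which Lemma~\ref{collinear rectangle} applies to $S$. Your additional chain argument, handling the possibility that the hull-aligned rectangle $R'$ picks up extra points so that it is aligned with $\CH(S)$ rather than with the hull of its own enclosed set, addresses a quantifier mismatch the paper silently glosses over; it is sound (the enclosed sets increase weakly inside the finite set $X$, so the process stabilizes, and at stabilization the rectangle is aligned with the hull of its own enclosed set), and is a genuine improvement in rigor over the paper's one-line assertion.
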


For further discussion, we define the enclosed and excluded areas. We also illustrate the notion in Figure~\ref{fig:sides}.
\begin{definition}[Enclosed/Excluded Area]
	\label{def:minority-majority}
	For the open areas divided by a line or a simple polyline, the one containing more points is called the \emph{enclosed area}, the union of the others is called the \emph{excluded area}.
\end{definition}

We let $P_1P_2$ denote the side of the convex hull that is covered by one side of $\sol^\ast(X,t)$, $P_3$ denote the point that is passed through by the opposite side, $P_4$ and $P_5$ denote the points that are on the other two sides respectively. See Figure~\ref{fig:sides} for illustration. For the number of enclosed points, we have the following lemma.
\begin{lemma}
\label{lem:enclosed points number}
$\opt(X,t)$ is either $(n-t)$ or $(n-t+1)$.
\end{lemma}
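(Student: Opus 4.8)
The plan is to prove the two bounds separately. The lower bound is immediate from feasibility: by Definition~\ref{parallelogram formulation}, every feasible rectangle encloses at least $(n-t)$ points, so in particular $\sol^\ast(X,t)$ encloses at least $(n-t)$. The whole content of the lemma is therefore the upper bound, i.e.\ that the optimal solution cannot enclose $(n-t+2)$ points or more, and I would establish this by contradiction by exhibiting a strictly better rectangle whenever the count is too large.

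For the upper bound, suppose $\sol^\ast(X,t)$ encloses $m\ge n-t+2$ points, and let $H$ be the convex hull of these $m$ enclosed points. By Corollary~\ref{collinear parallelogram with outliers}, one side of $\sol^\ast(X,t)$ — say the side $s_1$ lying on the supporting line of $P_1P_2$ — contains the hull edge $P_1P_2$, while the opposite side passes through $P_3$ and the two perpendicular sides pass through $P_4$ and $P_5$. The construction I would use is to translate $s_1$ inward, parallel to itself, by a small distance $\delta>0$, keeping the other three sides fixed. This keeps the region a rectangle, strictly decreases its area (the height drops by $\delta$ while the width is unchanged), and, for $\delta$ small enough, uncovers exactly the two points $P_1$ and $P_2$. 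The shrunken rectangle then encloses $m-2\ge n-t$ points with strictly smaller area, contradicting the optimality of $\sol^\ast(X,t)$; hence $m\le n-t+1$.

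The step I expect to be the crux is verifying that the inward translation uncovers \emph{exactly} two points, and this is precisely where the general-position hypothesis enters. Since $P_1P_2$ is an edge of $H$, every other enclosed point lies strictly on the interior side of the line through $P_1P_2$, hence at a strictly positive distance from $s_1$; and since no three points are collinear, $P_1$ and $P_2$ are the only enclosed points on that line. Taking $\delta$ strictly smaller than the minimum of these positive distances, the thin strip swept out by $s_1$ contains none of $P_3,P_4,P_5$ and none of the interior points, so only $P_1$ and $P_2$ leave the rectangle while all remaining $m-2$ points stay enclosed. Combined with the lower bound, this yields the asserted range.

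Finally, I would note why this argument delivers the bound $n-t+1$ rather than forcing the value $n-t$: the side we are guaranteed to shrink is the flush side, and it necessarily carries the two endpoints of a hull edge, so a single inward translation always removes two points at once. The construction therefore refutes every count $\ge n-t+2$ but stops short of ruling out $n-t+1$, which is exactly the range stated in the lemma.
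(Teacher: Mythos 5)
Your proof is correct and follows essentially the same route as the paper's: the lower bound is immediate from feasibility, and the upper bound is obtained by contradiction by translating a side of the rectangle inward, using the no-three-collinear assumption to guarantee that at most two points are excluded, so at least $n-t$ points remain in a strictly smaller rectangle. You simply spell out in more detail which side to move and why exactly two points leave, which the paper leaves implicit.
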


\begin{proof}
	By Definition~\ref{parallelogram formulation}, we have $\opt(X,t)\geq n-t$. If more than $(n-t+1)$ points are enclosed, then moving a side inwards will exclude at most 2 points (recall we assume there are no three collinear points), resulting in a smaller rectangle enclosing at least \((n-t)\) points, contradicting the optimality of $\sol^\ast(X,t)$.
\end{proof}

While the optimal rectangle may enclose $(n-t)$ or $(n-t+1)$ points, for simplicity, we will only discuss the situation where the rectangle encloses $(n-t)$ points in Section~\ref{sec:p1} and~\ref{sec:p2}. The analysis and algorithms can be applied to the case of $(n-t+1)$ points, simply by replacing $t$ with $(t-1)$ in $(n-t)$. As there are at least $(n-t)$ points (including $P_{1}$, $P_{2}$ and $P_{3}$) between the \emph{line} $P_{1}P_{2}$ and its parallel through $P_{3}$, at most $t$ points therefore lie in the excluded area of the line $P_1P_2$. We next give the definition of valid pair.

\begin{definition}[Valid Pair]
\label{valid}
Given a pair of points $(P_{1}, P_{2})$, if the excluded area of the \emph{line} $P_1P_2$ contains at most $t$ points, then $(P_{1}, P_{2})$ is called a valid pair.
\end{definition}

The points $P_1$, $P_2$, $P_3$, $P_4$ and $P_5$ which are on the sides of a rectangle that encloses $(n-t)$ points must satisfy the following conditions. First, $(P_1, P_2)$ must be a valid pair. Second, there are at least $(n-t)$ points (including $P_{1}$, $P_{2}$ and $P_{3}$) between the \emph{line} $P_{1}P_{2}$ and its parallel through $P_3$. Third, $P_4$ and $P_5$ must lie between the \emph{line} $P_1P_2$ and its parallel through $P_3$.

After finding the five points, we can then determine the four sides of the rectangle. We draw one line passing through $P_1$ and $P_2$, and then a parallel through $P_3$, and two perpendicular lines through $P_4$ and $P_5$ respectively. The enclosed area of the four lines is the desired rectangle.

\begin{algorithm}[!t]
	\caption{Searching for all valid pairs $(P_1,P_2)$}
	\label{alg:valid_pairs}
	\fontsize{7.5}{8}\selectfont
	\begin{algorithmic}[1]
		\Function{Valid-Pairs}{$X,t$}
		\State $\Lambda \gets \emptyset$
		\Comment{the set is to collect all the valid pairs}
		\For{$P_1$ in $X$}
		\State $P_0\gets$ arbitrary point in $X\setminus \{P_1\}$
		\Comment{starting point of the rotation}
		\State $L, R\gets $ points in $X\setminus \{P_0,P_1\}$ on the left and right side of \emph{ray} ${P_0P_1}$
		\label{alg:valid_pairs:LR}
		\State Sort $L$ and $R$ respectvely in increasing order of the clockwise angle around $P_0$ \label{alg:valid_pairs:LRsort}
		\State $n_L\gets|L|$, $n_R\gets |R|$
		\Comment{$n_L$ and $n_R$ record the size of $L$ and $R$}
		\While{the clockwise angle $P_1P_0P_2 < \pi$}
		\label{alg:valid_pairs:loop_content_start}
		\If{$\angle L_1P_0R_1<\pi$}
		\Comment{determine whether \emph{line} $P_1P$ meets $R_1$ or $L_1$ next}
		\State $P_2\gets R_1$
		\State  pop $R_1$ from $R$ and add it to the end of $L$ \label{alg:valid_pairs:H1}
		\State $n_R\gets n_R-1$, $n_L\gets n_L+1$
		\Else
		\State $P_2\gets L_1$
		\State pop $L_1$ from $L$ and add it to the end of $R$ \label{alg:valid_pairs:H2}
		\State $n_R\gets n_R+1$, $n_L\gets n_L-1$
		\EndIf
		\If{$n_L\leq t$}
		\State add $(P_1, P_2, n_L,R)$ into $\Lambda$
		\EndIf
		\If{$n_R\leq t$}
		\State add $(P_1, P_2, n_R,L)$ into $\Lambda$
		\EndIf
		\EndWhile
		\label{alg:valid_pairs:loop_content_end}
		\EndFor
		\State \Return $\bm\Lambda$
		\EndFunction
	\end{algorithmic}
\end{algorithm}

\subsection{Searching for $P_1$ and $P_2$}
\label{sec:p1}
We now discuss how to find all the valid pairs $(P_1, P_2)$. The technique we used here is known as rotating calipers \cite{toussaint1983solving}, a powerful tool widely employed in solving problems of computational geometry. For a given point $P_1$ and an arbitrary but fixed point $P_0\in X\setminus\{P_1\}$, we divide the plane into two open halves. The open half plane on the clockwise side of the \emph{ray} $P_1P_0$ is referred to as its right side, the other its left side. We then sort the points on the left and the right sides respectively in the clockwise order around $P_1$. The left and right sorted lists are denoted by $L$ and $R$ accordingly.

We start $P$ from $P_0$ and rotate \emph{ray} $P_1P$ around $P_1$ for a half circle. During the rotation, we use $P_2$ to keep the latest point met by \emph{line} $P_1P$. The lists $L$ and $R$ are also updated to store points on the left and the right sides of the rotating \emph{ray} $P_1P$ respectively. The clockwise angle $\angle L_1P_1R_1$ indicates whether $L_1$ or $R_1$ will be first hit by the rotating \emph{line} $P_1P$ and thus which side the next $P_2$ should come from. If $\angle L_1P_1R_1<\pi$, the next $P_2$ will come from the right side, and else if $\angle L_1P_1R_1>\pi$, the left side. Note that under the assumption that no three points lie on the same line, $\angle L_1P_1R_1$ can never be $\pi$. The point $P_2$ will be popped from the head of the list and then added to the end of the other list on the opposite side.  The size of the first list will decrease by $1$ and that of the second will increase by $1$. We use two auxiliary variables $n_L$ and $n_R$ to record the sizes of $L$ and $R$ respectively. When $n_L\leq t$ or $n_R\leq t$, we can conclude that $(P_1, P_2)$ is a valid pair.

Iterating the process over $P_1\in X$, we can find all the valid pairs $(P_1, P_2)$ and store them in set $\Lambda$.  For further convenience, we also record the set of points in the enclosed area. The algorithm is presented in Algorithm~\ref{alg:valid_pairs}.  We also illustrate the rotation of \emph{ray} $P_1P$ by an example in Figure~\ref{fig:rotation}.

\begin{figure}[!t]
	\begin{minipage}[c]{0.4\textwidth}
		\centering
		\begin{tikzpicture}[scale=0.65]
		\def\centerarc[#1](#2)(#3:#4:#5)
		{ \draw[#1] ($(#2)+({#5*cos(#3)},{#5*sin(#3)})$) arc (#3:#4:#5); }
		
		\node [label={[shift={(-0.3,-0.3)}]$P_1$}, fill, draw, circle, minimum width=3pt, inner sep=0pt] at (0,0) {};
		
		\draw[dashed] (0,0) -- (2,2.5);
		\draw[-latex,thick] (0,0) -- (-2,-2.5);
		\node at (-2,-1.4) {$A_1(P_2)$};
		\node at (-1.7,-0.5) {$A_2$};
		\node at (-2.3,0.5) {$A_3$};
		\node at (-1.5,1.7) {$A_4$};
		\node at (0,2) {$A_5$};
		\node at (2.1,1.2) {$A_6$};
		\node at (2.4,-0.9) {$A_7$};
		\node at (0.8,-1.5) {$A_8$};
		\node at (0.2,-2.4) {$A_9$};
		
		\centerarc[-latex,thick,line cap=round](0,0)(227:198:3.5);
		\centerarc[-latex,thick,line cap=round](0,0)(47:18:3.5);
		
		\draw node[fill,circle,minimum width=4pt, inner sep=0pt] at (-1.2,1.3) {};
		\draw node[fill,circle,minimum width=4pt, inner sep=0pt] at (-2,0.7) {};
		\draw node[fill,circle,minimum width=4pt, inner sep=0pt] at (-1.4,-0.3) {};
		\draw node[fill,circle,minimum width=4pt, inner sep=0pt] at (0,1.7) {};
		\draw node[fill,circle,minimum width=4pt, inner sep=0pt] at (1.7,1.3) {};
		\draw node[fill,circle,minimum width=4pt, inner sep=0pt] at (2,-0.9) {};
		\draw node[fill,circle,minimum width=4pt, inner sep=0pt] at (0.5,-1.3) {};
		\draw node[fill,circle,minimum width=4pt, inner sep=0pt] at (0.2,-2) {};
		\draw node[fill,circle,minimum width=4pt, inner sep=0pt] at (-1.12,-1.4) {};
		\end{tikzpicture}
	\end{minipage}
	\hfill
	\begin{minipage}[c]{0.5\textwidth}
		\caption{Clockwise rotation of the \emph{ray} $P_1P$. In the current position, $P_2=A_1$, the left side is $(A_6,A_7,A_8,A_9)$ and the right side is $(A_2,A_3,A_4,A_5)$. Since $\angle A_6P_1A_2>\pi$, the next $P_2$ is $A_6$. The new left side will become $(A_7,A_8,A_9,A_1)$  and the new right side will be as $(A_2,A_3,A_4,A_5,A_6)$.}
		\label{fig:rotation}
	\end{minipage}
\end{figure}

\begin{algorithm}[!t]
	\caption{Searching for $P_3$, $P_4$ and $P_5$}
	\label{alg:para}
	\fontsize{7.5}{8}\selectfont
	\begin{algorithmic}[1]
		\Function{Enclose}{$M, t, P_1, P_2, m$}
		\Comment{$m$ is the number of points in the excluded area of \emph{line} $P_1P_2$, $M$ is the set of points in the enclosed area of \emph{line} $P_1P_2$}
		\State $J\gets$ the $(t+1)$ leftmost points in $M$ along the direction of \emph{ray} $P_1P_2$
		\State $H\gets$ the $(t+1)$ rightmost points in $M$ along the direction of \emph{ray} $P_1P_2$
		\State $G\gets$ the $(t+1)$ farthest points in $M$ from \emph{line} $P_1P_2$
		\State store $J$ and $H$ in two linked lists, both are ordered from left to right
		\State store $G$ in a linked list, whose point are in the decreasing order of their distance to $P_1P_2$
		\State $\Gamma\gets\emptyset$
		\For{$i \gets 1, 2,\dots, t+1$}
		\State $P_4\gets J_i$
		\State $x\gets$ the number of points excluded by \emph{line} $P_1P_2$ and the perpendicular line through $P_4$
		\State $c\gets t+1$
		\Comment{$c$ stores the number of excluded points}
		\State $k\gets 1$
		\Comment{initialize the index of the candidate for $P_3$}
		\label{alg:para:finding_start}
		\For{$j \gets x+1,\cdots,t+1$}
		\State $P_5\gets H_j$
		\If{$P_5$ lies between the \emph{line} $P_1P_2$ and the parallel through $P_3$}
		\If{$j < t+1$ \textbf{and} ($H_{j+1}$ is between \emph{line} $P_1P_2$ and the parallel through $P_3$) \textbf{and} \newline \indent\indent\indent\ \ \ \ ($H_jH_{j+1}$ is perpendicular to $P_1P_2$)}
		\State $c\gets c-2$
		\Else
		\State $c\gets c-1$
		\EndIf
		\EndIf
		\While{$c < t$ \textbf{and} $k<t+1$}
		\If{$G_k$ is between the perpendicular lines through $P_4$ and $P_5$}
		\If{$G_kG_{k+1}$ is parallel to $P_1P_2$}
		\State $c\gets c+2$
		\State $k\gets k+2$
		\Else
		\State $c\gets c+1$
		\State $k\gets k+1$
		\EndIf
		\EndIf
		\EndWhile
		\If{$c=t$}
		\State $P_3\gets G_k$
		\State add $(P_3, P_4, P_5)$ to $\Gamma$
		\EndIf
		\label{alg:para:finding_end}
		\EndFor
		\EndFor
		\State \Return $\Gamma$
		\EndFunction
	\end{algorithmic}
\end{algorithm}

\subsection{Finding $P_3$, $P_4$ and $P_5$}
\label{sec:p2}

First, $P_3$, $P_4$ and $P_5$ must be in the enclosed area of the \emph{line} $P_1P_2$. Without loss of generality, we may assume $P_4$ is on the left of $P_5$ along the direction of the \emph{ray} $P_1P_2$. Furthermore, $P_3$ is among the $(t+1)$ farthest points from the \emph{line} $P_1P_2$, $P_4$ is among the $(t+1)$ leftmost points and $P_5$ is among the $(t+1)$ rightmost points. We can use the well-known min or max heap to find all the candidates for $P_3$, $P_4$ and $P_5$. For further convenience, the candidates for $P_3$ are then stored in a linked list, in the decreasing order of their distance to the \emph{line} $P_1P_2$. So are the candidates for $P_4$ and $P_5$, both of which are ordered from left to right.

When $P_1$, $P_2$ and $P_4$ are given,  there can be at most $(t+1)$ pairs of $(P_3, P_5)$ such that the rectangle encloses $(n-t)$ points. We let $x\leq t$ denote the number of points excluded by \emph{line} $P_1P_2$ and the perpendicular line through $P_4$. We show how to find all the pairs $(P_3, P_5)$. We claim that we can skip the first $x$ candidates for $P_5$. Indeed, if we choose any of the first $x$ candidates as $P_5$, there will be at least $(t+1-x)$ another excluded points, resulting in at least $(t+1)$ excluded points in total. We therefore initialize $P_5$ as the $(x+1)$-th candidate, and $P_3$ as the $1$st candidate. The total number of excluded points is now $t$. Whenever we go to the next candidate for $P_5$, we need to check whether it is between the \emph{line} $P_1P_2$ and the parallel through $P_3$. If it is, then the number of excluded points will decrease by $1$ or $2$, depending on whether $1$ or $2$ points will be included when moving the perpendicular line outwards. If not, the number does not change. We continue moving $P_3$ to the next candidate until the number of excluded points becomes at least $t$. We repeat moving $P_5$ and $P_3$ this way. The whole process is presented in Algorithm~\ref{alg:para}.

\subsection{Finding \MER}
We finally enumerate all possible combinations of $P_1$, $P_2$, $P_3$, $P_4$ and $P_5$, and find the one of the minimum area among all the rectangles which encloses $(n-t)$ or $(n-t+1)$ points. The overall algorithm is presented in Algorithm~\ref{alg:main}. Notice that the rectangle with overlapping points on its sides is merely a special case of the ones where the six points do not overlap. The algorithm can be adapted to the overlapping cases and this will only increase a constant factor to the overall time complexity.

\begin{algorithm}[!t]
	\caption{Partial-\MER}
	\label{alg:main}
	\fontsize{7.5}{8}\selectfont
	\begin{algorithmic}[1]
		\Require a planar point set $X$, the number of outliers $t$.
		\State $A\gets +\infty$
		\State $rec^\ast\gets nil$
		\Comment{Initialize the area of the rectangle}
		\For{$(P_{1}, P_{2}, m, M)$ in $\Call{valid-pairs}{X,t}$} \label{alg:main:loop_begin}
		\Comment{Algorithm~\ref{alg:valid_pairs}}
		\For{$(P_3, P_4, P_5)$ in $\Call{enclose}{M, t, P_1, P_2, m}\cup\Call{enclose}{M, t-1, P_1, P_2, m}$}
		\Comment{Algorithm~\ref{alg:para}}
		\parState{$rec\gets$ the rectangle with $P_1$ and $P_2$ on one side, $P_3$ on the opposite side, $P_4$ and $P_5$ on the two perpendicular sides respectively}
		\If{$\area(rec) < A$}
		\State $rec^\ast \gets rec$
		\State $A\gets \area(rec)$ \label{alg:main:loop_end}
		\EndIf
		\EndFor
		\EndFor
		\State \Return $para^\ast$
	\end{algorithmic}
\end{algorithm}

\section{Time Analysis of Algorithm~\ref{alg:main}}
\label{sec:analysis}
In order to enclose $(n-t)$ or $(n-t+1)$ points, there must be at most $t$ points in the excluded area of the \emph{line} $P_1P_2$. In another word, $(P_1, P_2)$ is a valid pair. For a fixed point $P_1$, we let \emph{ray} $P_1P$ rotate a half circle around $P_1$ in the clockwise order. In the rotation, the points met by \emph{line} $P_1P$ in order are marked by $X_1,X_2,\dots, X_{n-1}$ . Let $n_L^{(i)}$ and $n_R^{(i)}$ be the number of points on the left and on the right side of the \emph{ray} $P_1P$ respectively, when the \emph{line} $P_1P$ meets $X_i$ during the rotation. The following lemma tracks how $n_L^{(i)}$ and $n_R^{(i)}$ change in the process.

\begin{lemma}\label{change}
Rotate the \emph{ray} $P_1P$ in the clockwise order. The \emph{line} $P_1P$ will meet the points of $X\setminus\{P_{1}\}$ in the sequence of $X_1,X_2,\dots, X_{n-1}$. Let $n_{L}^{(i)}$ and $n_{R}^{(i)}$ be the number of points on the left and right side of the \emph{ray} $P_{1}P$ respectively, when the \emph{line} $P_1P$ meets the point $X_i$. Then
\[
n_{L}^{(i)} =  n_{L}^{(i-1)} + \delta_i \quad\text{and}\quad n_{R}^{(i)} =  n_{R}^{(i-1)} - \delta_i,
\]
where
\[
\delta_i = \begin{cases}
			1, & \text{if } X_i \text{ is previously on the right side of the \emph{ray} }P_1P;\\
			-1, & \text{if } X_i \text{ is previously on the left side of the \emph{ray} }P_1P;
		\end{cases}
\]

\end{lemma}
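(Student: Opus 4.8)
The plan is to treat the statement as a bookkeeping identity about a single continuous sweep, and to reduce it to two facts: that the rotating line meets the points one at a time in a well-defined order, and that each such meeting flips exactly the met point between the two open half-planes, leaving every other point's side unchanged.

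First I would parametrize the rotation by the orientation angle $\theta$ of the \emph{ray} $P_1P$, which decreases as the ray turns clockwise over an interval of length $\pi$. For a point $Q\neq P_1$, writing $Q-P_1$ in polar form with angle $\phi_Q$, the signed distance from $Q$ to the oriented line through $P_1$ at angle $\theta$ is $d_Q(\theta)=\lVert Q-P_1\rVert\,\sin(\phi_Q-\theta)$, with the sign fixed so that $d_Q>0$ means $Q$ lies on the left of the ray and $d_Q<0$ on the right. Over the half-turn the argument $\phi_Q-\theta$ traverses an interval of length exactly $\pi$, so $d_Q$ has exactly one zero, at which it crosses transversally (the derivative there is a nonzero cosine). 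This shows that \emph{line} $P_1P$ meets each of the $n-1$ points exactly once, justifying the sequence $X_1,\dots,X_{n-1}$. Moreover, two points $Q,Q'$ are met at the same orientation iff $\phi_Q\equiv\phi_{Q'}\pmod\pi$, i.e.\ iff $P_1,Q,Q'$ are collinear, which the standing no-three-collinear assumption forbids; hence all meeting orientations are distinct and the events are isolated and ordered.

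I would then run the recurrence event by event. Between the instants at which $X_{i-1}$ and $X_i$ are met, no other $d_Q$ changes sign, so every point except $X_i$ retains its side. Since $d_{X_i}$ changes sign exactly at the $i$-th event, $X_i$ passes from one open half-plane to the \emph{opposite} one. If $X_i$ was previously on the right it is now on the left, so $n_L$ gains it and $n_R$ loses it, giving $\delta_i=+1$; if it was previously on the left it moves to the right, giving $\delta_i=-1$. In either case exactly one point moves, which is precisely the claimed $n_L^{(i)}=n_L^{(i-1)}+\delta_i$ and $n_R^{(i)}=n_R^{(i-1)}-\delta_i$ (with $n_L^{(0)},n_R^{(0)}$ the split induced by the starting \emph{ray} $P_1P_0$).

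I expect the only genuinely delicate point to be fixing the direction convention cleanly: one must check that ``left/right of the \emph{ray}'' and ``left/right of the oriented \emph{line}'' name the same two half-planes, and that a clockwise turn really makes the met point leave its current side rather than re-enter it. The signed-distance formula settles both at once, since the sign of $d_{X_i}$ genuinely flips at the crossing. Note also that whether $X_i$ is overtaken by the forward ray or by its opposite ray is immaterial, because the recurrence is phrased purely in terms of the side $X_i$ occupied \emph{before} the event, so both cases are covered uniformly. The remaining steps are routine.
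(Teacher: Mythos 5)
Your proof is correct. Note that the paper itself offers no proof of Lemma~\ref{change}: it is stated as a self-evident bookkeeping observation and the text moves straight on to the consequence $n_R^{(n-1)}=n-s-2$, so there is no argument of the authors' to compare yours against. Your signed-distance formalization $d_Q(\theta)=\lVert Q-P_1\rVert\sin(\phi_Q-\theta)$ supplies exactly the two facts the paper takes for granted, and it handles the genuinely delicate points correctly: the no-three-collinear assumption guarantees the meeting events are distinct and isolated (so exactly one point changes side per event), and phrasing $\delta_i$ in terms of the side $X_i$ occupied \emph{before} the event makes the forward-ray and opposite-ray cases uniform, which is consistent with how Algorithm~\ref{alg:valid_pairs} updates $n_L$ and $n_R$ (the met point is counted on its new side). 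The only cosmetic caveat is the boundary convention at the endpoints of the length-$\pi$ sweep interval (it should be taken half-open so each $d_Q$ has exactly one zero), but this does not affect the recurrence.
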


Suppose there is a point $P_0\neq P_1$ such that there are $s$ points on the right side of \emph{ray} $P_1P_0$, where $s\leq t$. We initialize $P$ to be $P_{0}$ and rotate $P$ around $P_1$ by 180 degree. The \emph{line} $P_1P$ sweeps over all the points except $P_1$. And the initial left side of \emph{ray} $P_1P$ becomes the right side. This indicates that there are finally $(n-s-2)$ points on the right side, that is, $n_R^{(n-1)} = n-s-2$. We are now ready to prove an auxiliary lemma, by which we further prove that  there are not too many valid pairs.
\begin{lemma}
Suppose a sequence of numbers $x_1,x_2,\dots,x_{n-1}$ satisfies that $x_{1}=s$, $x_{n-1}=n-s-2$, $x_{i+1}-x_{i}\in \{-1,1\}$. When $s\leq t$, there are at most $(2t+1)$ indices $i$ satisfying $x_{i}\leq t$.
\end{lemma}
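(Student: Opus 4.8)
The plan is to read $x_1,\dots,x_{n-1}$ as a lattice walk taking a $+1$ or $-1$ step at each index, and to bound how often it can sit at level $t$ or below by budgeting its down-steps. First I would pin down the number of down-steps: the walk makes $n-2$ steps of $\pm 1$ with total displacement $x_{n-1}-x_1=(n-s-2)-s=n-2s-2$, so solving the two linear relations for the numbers of $+1$ and $-1$ steps shows there are exactly $s$ down-steps (and $n-s-2$ up-steps). This exact count, which never exceeds $t$ since $s\le t$, is the resource I will spend. I would then dispose of the degenerate regime: if $x_{n-1}\le t$ then $n-s-2\le t$, so $n\le s+t+2\le 2t+2$ and the whole sequence has only $n-1\le 2t+1$ terms, making the claim immediate. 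Hence I may assume $x_{n-1}>t$, i.e.\ the walk ends strictly above level $t$.

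In this main case I would partition the indices into maximal runs. Let $B_1,\dots,B_m$ be the maximal blocks of consecutive indices with $x_i\le t$, separated by the runs where $x_i\ge t+1$. Because $x_1=s\le t$ the walk begins inside $B_1$, and because $x_{n-1}>t$ it ends above the strip, so the two kinds of block alternate and there are $m$ below-blocks interleaved with $m$ above-blocks. The structural point I would establish is that each below-block begins and ends at a controlled value: $B_1$ starts at value $s$ and, being followed by an up-crossing, ends at value $t$; every later $B_j$ is entered by a down-crossing from $t+1$ and left by an up-crossing, hence both starts and ends at value $t$.

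To finish I would convert block lengths into down-steps. A block with $b_j$ internal down-steps and net displacement $\Delta$ has length $2b_j+\Delta+1$, so $\lvert B_1\rvert=2b_1+(t-s)+1$ while $\lvert B_j\rvert=2b_j+1$ for $j\ge 2$. The $s$ down-steps split disjointly into those internal to below-blocks ($\sum_j b_j$), those internal to above-blocks, and the $m-1$ down-crossings of level $t+\frac12$ (exactly one feeding each $B_j$ with $j\ge 2$); therefore $\sum_j b_j+(m-1)\le s$. Summing the lengths gives $\sum_j\lvert B_j\rvert=(t-s)+m+2\sum_j b_j\le (t-s)+m+2(s-m+1)=t+s-m+2\le t+s+1\le 2t+1$, where the last two steps use $m\ge 1$ and $s\le t$.

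I expect the main obstacle to be the block bookkeeping of the last two paragraphs: arguing cleanly that every non-initial below-block has zero net displacement, and apportioning the fixed budget of $s$ down-steps among internal down-steps and the $m-1$ down-crossings with no double counting. Once the exact down-step count and the reduction to the case $x_{n-1}>t$ are in place, the remaining steps are routine, and the monotonicity in $m$ together with $s\le t$ yields the stated constant $2t+1$ (which is tight when $s=t$).
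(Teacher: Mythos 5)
Your proof is correct, but it takes a genuinely different and considerably heavier route than the paper's. The paper's argument is two lines: let $i^\ast$ be the \emph{largest} index with $x_{i^\ast}\leq t$; since the walk must climb from $x_{i^\ast}\leq t$ to $x_{n-1}=n-s-2$ using steps of size at most $+1$, we get $n-s-2-t\leq x_{n-1}-x_{i^\ast}\leq n-1-i^\ast$, hence $i^\ast\leq s+t+1\leq 2t+1$, and every qualifying index lies in $\{1,\dots,2t+1\}$. Your approach instead decomposes the walk into maximal excursions below level $t$, computes the exact number $s$ of down-steps from the endpoint data, and apportions that budget among internal down-steps and the $m-1$ down-crossings; the bookkeeping (net displacement of each block, disjointness of the down-step classes, the degenerate case $x_{n-1}\leq t$) all checks out and arrives at the same bound $s+t+1\leq 2t+1$. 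What the paper's argument buys is brevity and a slightly stronger conclusion -- the qualifying indices are confined to a prefix of length $2t+1$, not merely at most $2t+1$ in number -- while your excursion decomposition buys finer structural information (the refinement $t+s-m+2$ in terms of the number of excursions) that the lemma does not need. The one place to be careful in your write-up is the claim that every non-initial below-block enters at value exactly $t$ and exits at value exactly $t$; this relies on the $\pm 1$ step size and should be stated explicitly, but it is correct as you use it.
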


\begin{proof}
We let $i^\ast$ denote the largest index $i$ such that $x_{i}\leq t$. Then we have
$x_{n-1}-x_{i^\ast}\geq n-s-2-t\geq n-2t-2$. On the other hand, $x_{n-1}-x_{i^\ast}=\sum_{i=i^\ast}^{n-2}(x_{i+1}-x_i)\leq \sum_{i=i^\ast}^{n-2}1=n-i^\ast-1$. By the two inequalities, we have $i^\ast\leq 2t+1$ and the possible $i$ satisfying $x_{i}\leq t$ can only be in $i\in\{1,2,...,2t+1\}$.
\end{proof}

The bound can be applied to both the left and the right sides of \emph{ray} $P_{1}P$. Therefore we have the following corollary.
\begin{corollary}\label{cor:pair_count_fixed_P1}
Given a fixed $P_{1}$, there are at most $2\cdot(2t+1)=4t+2$ valid pairs $(P_{1},P_{2})$.
\end{corollary}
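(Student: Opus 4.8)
The plan is to translate the geometric count of valid pairs into the purely combinatorial count supplied by the preceding auxiliary lemma, and then apply that lemma once per side together with a union bound.

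First I would fix $P_1$ and recall that, by Definition~\ref{valid}, a pair $(P_1,P_2)$ is valid exactly when the excluded area of the \emph{line} $P_1P_2$ holds at most $t$ points. Since by Definition~\ref{def:minority-majority} the excluded area is the side carrying the fewer points, this is equivalent to $\min(n_L,n_R)\le t$, where $n_L,n_R$ count the points on the two sides of the line. As the \emph{ray} $P_1P$ sweeps a half circle, it meets each of the $n-1$ points $X_1,\dots,X_{n-1}$ exactly once, so the candidate partners $P_2$ are precisely these points, and the pair $(P_1,X_i)$ is valid iff $n_L^{(i)}\le t$ or $n_R^{(i)}\le t$. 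Hence the number of valid pairs with this $P_1$ equals $|\{i: n_L^{(i)}\le t \text{ or } n_R^{(i)}\le t\}|$, which by the union bound is at most $|\{i: n_L^{(i)}\le t\}|+|\{i: n_R^{(i)}\le t\}|$.

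Next I would bound each of the two terms by the auxiliary lemma. For this counting I am free to choose the starting point $P_0$, because the set of valid pairs depends only on the geometry of $X$ around $P_1$, not on where the rotation begins; I would take $P_0$ with $s\le t$ points on the right of \emph{ray} $P_1P_0$ (for instance the angularly extreme $P_0$ giving $s=0$, which always exists). By Lemma~\ref{change} the sequence $x_i=n_R^{(i)}$ has consecutive differences in $\{-1,1\}$, and by the discussion preceding the lemma it satisfies $x_1=s$ and $x_{n-1}=n-s-2$, so the auxiliary lemma yields at most $2t+1$ indices with $n_R^{(i)}\le t$. For the left side I would invoke the same lemma on the reversed sequence $y_j=n_L^{(n-j)}$: reversal swaps the endpoints, giving $y_1=n_L^{(n-1)}=s\le t$ and $y_{n-1}=n_L^{(1)}=n-s-2$, which again matches the hypotheses and bounds by $2t+1$ the indices with $n_L^{(i)}\le t$. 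Adding the two bounds gives $4t+2$.

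The steps are mostly bookkeeping, so the main obstacle is conceptual rather than computational: one must verify that left and right are genuinely symmetric for the purpose of the lemma, i.e.\ that the left-count sequence, read backwards, starts at the value $s\le t$. This hinges on the identity $n_L^{(i)}+n_R^{(i)}=n-2$ at every step (both $P_1$ and the swept point lie on the line and are excluded), which forces the left side to end at $s$ precisely because the right side begins at $s$. Making this relabelling explicit is the only place where care is needed; once it is in place, both applications of the auxiliary lemma are identical and the union bound closes the argument.
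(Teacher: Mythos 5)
Your overall route is the same as the paper's: the paper also obtains the corollary by applying the auxiliary lemma once to the right-side counts and once to the left-side counts and summing the two bounds of $2t+1$, and your observation that the left-count sequence read backwards satisfies the lemma's hypotheses (via the identity $n_L^{(i)}+n_R^{(i)}=n-2$) is precisely the symmetry the paper invokes with the sentence ``the bound can be applied to both the left and the right sides of \emph{ray} $P_1P$.'' Your reduction of validity to $\min(n_L^{(i)},n_R^{(i)})\le t$ and the union bound are also exactly what the paper intends.

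The one step that does not survive scrutiny is your choice of $P_0$. You assert that an ``angularly extreme'' $P_0$ with $s=0$ always exists; this holds only when $P_1$ lies on the convex hull of $X\setminus\{P_1\}$. If $P_1$ is strictly interior to that hull, every line through $P_1$ has points strictly on both sides, so $s\ge 1$ for every choice of $P_0$, and if $P_1$ lies deep inside the point set (for instance inside the $(t+1)$-st convex layer) there is no ray from $P_1$ with $s\le t$ on its right at all. The repair is cheap and keeps you within the paper's argument: if no valid pair with first coordinate $P_1$ exists, the bound is vacuous; otherwise take $P_0$ to be the partner of some valid pair, which guarantees $\min\bigl(n_L^{(1)},n_R^{(1)}\bigr)\le t$. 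If it is the right side that is small you proceed as you wrote; if it is the left side, you exchange the roles of the two sequences (apply the auxiliary lemma to the left counts directly and to the reversed right counts), which the identity $n_L^{(i)}+n_R^{(i)}=n-2$ already licenses. With that modification the argument is complete and coincides with the paper's.
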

Besides, we have the following observation on the position of $P_{1}$.

\begin{lemma}
$P_{1}$ can only be on the first $(t+1)$ convex layers of \(X\).
\end{lemma}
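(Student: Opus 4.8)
The plan is to reduce the statement to a purely combinatorial fact about convex peeling, exploiting that $P_1$ is an endpoint of the valid pair $(P_1,P_2)$. By Definition~\ref{valid}, the open half-plane on the excluded side of the \emph{line} $P_1P_2$ contains at most $t$ points of $X$; since $P_1$ lies on this line, this furnishes a line through $P_1$ one of whose open half-planes contains at most $t$ points. I would therefore prove the following general claim, of which the lemma is the special case $m=t$: if there is a line through a point $p\in X$ such that one open half-plane bounded by it contains at most $m$ points of $X$, then $p$ lies on one of the first $(m+1)$ convex layers of $X$.

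To prove the claim I would fix coordinates so that the line is horizontal and the ``light'' open half-plane $H$ is the upper one; let $A=X\cap H$, so that $|A|\le m$, the point $p$ sits at height $0$, and every point outside $A\cup\{p\}$ has height at most $0$. I then track $p$ through the peeling process $X_1=X$, $X_{j+1}=X_j\setminus \CH(X_j)$, writing $A_j=A\cap X_j$ for the survivors of $A$ at stage $j$. The first observation is a monovariant: as long as $A_j\neq\emptyset$, the topmost point of $X_j$ has positive height, hence lies in $A_j$ and is a vertex of $\CH(X_j)$, so it gets peeled; thus $|A_{j+1}|\le |A_j|-1$ whenever $A_j\neq\emptyset$. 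The second observation uses the no-three-collinear assumption: if $p$ survives peel $j$ then it is not a hull vertex, and since it cannot lie in the relative interior of an edge it must be strictly interior to $\CH(X_j)$; strict interiority forces the fixed half-plane $H$ to contain a point of $X_j$, i.e. $A_j\neq\emptyset$.

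Combining the two observations closes the argument. If $p$ survived peels $1,\dots,m+1$, then $A_1,\dots,A_{m+1}$ would all be nonempty, so the monovariant gives $|A_{m+1}|\le |A_1|-m\le m-m=0$, contradicting $A_{m+1}\neq\emptyset$. Hence $p$ appears on some $\CH(X_j)$ with $j\le m+1$, which is precisely the assertion that $p$ lies on one of the first $(m+1)$ convex layers. Applying this with $p=P_1$ and $m=t$ yields the lemma.

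I expect the main obstacle to be the careful justification of the step ``interior forces $A_j\neq\emptyset$''. One must argue that surviving a peel really means being strictly interior rather than merely incident to an edge, which is exactly where the no-three-collinear hypothesis enters, and then that strict interiority of $p$ in $\CH(X_j)$ guarantees a surviving point specifically in the fixed half-plane $H$, not in some other half-plane through $p$. Once this is pinned down, the remainder is routine bookkeeping on the quantity $|A_j|$.
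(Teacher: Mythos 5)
Your proof is correct and rests on the same core idea as the paper's: a point lying deeper than the first $(t+1)$ convex layers would force each of those layers to contribute at least one point to the open half-plane that validity requires to contain at most $t$ points. The paper states this directly (any line through such a point crosses each of the first $(t+1)$ layer polygons, so each layer places a vertex on each side of the line), whereas you derive the same count via a peeling monovariant on the ``light'' half-plane; your rendering is somewhat more careful about strict interiority and degeneracies, but it is essentially the same argument.
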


\begin{proof}
If $P_{1}$ is inside the $(t+1)$-th convex layer, then any line $\ell$ through $P_{1}$ must intersect each of the first $(t+1)$ convex layers. Therefore, on any side of $\ell$, there must be at least one point from each of the first $(t+1)$ convex layers. In total, there would be at least $(t+1)$ points, which contradicts the fact that $(P_1,P_2)$ is a valid pair. We therefore conclude that $P_{1}$ must be on the first $(t+1)$ convex layers.
\end{proof}
We can now bound the total number of valid pairs.

\begin{corollary}
There are in total $O(tk)$ valid pairs $(P_{1}, P_{2})$.
\label{number}
\end{corollary}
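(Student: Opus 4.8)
The plan is to obtain the bound as a direct product of the two results just established, namely the per-$P_1$ count and the structural restriction on where $P_1$ can lie. Since the corollary is stated over all valid pairs (not just those sharing a fixed first coordinate), I would first reduce the global count to a sum over admissible choices of $P_1$, and then bound the number of admissible $P_1$ and the number of pairs per $P_1$ separately.

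First I would use the preceding lemma, which asserts that $P_1$ can only lie on the first $(t+1)$ convex layers of $X$. By the definition of $k=k(X,t)$ as the number of points on these layers, there are therefore at most $k$ candidate points that can ever appear as the first coordinate of a valid pair. This is the only role the convex-layer lemma plays: it replaces the naive bound of $n$ possible choices for $P_1$ by the (typically much smaller) quantity $k$.

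Next I would invoke Corollary~\ref{cor:pair_count_fixed_P1}, which guarantees that for \emph{each} fixed $P_1$ there are at most $4t+2$ valid pairs $(P_1,P_2)$. The key point to check is that this per-$P_1$ bound is uniform, i.e.\ it holds for every admissible $P_1$ independently of its position, so it may be applied simultaneously to all $k$ candidates. Summing the per-$P_1$ bound over the at most $k$ admissible values of $P_1$ then yields a total of at most $k\cdot(4t+2)=O(tk)$ valid pairs, which is exactly the claimed bound.

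There is no genuine obstacle here: the statement is a routine combination of the two preceding results, and the only thing to be careful about is that the two bounds are truly independent, so that their product (rather than, say, a more delicate case analysis) correctly bounds the total. I would simply note explicitly that every valid pair is counted in the double sum ``$\sum_{P_1} (\text{pairs with first coordinate } P_1)$'', which is valid because each pair has a well-defined first coordinate $P_1$ that must be among the $k$ layer points.
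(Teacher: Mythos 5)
Your proposal is correct and follows essentially the same route as the paper: restrict $P_1$ to the $k$ points on the first $(t+1)$ convex layers, apply the uniform per-$P_1$ bound of $4t+2$ from Corollary~\ref{cor:pair_count_fixed_P1}, and multiply. The paper additionally observes that each (unordered) valid pair is counted twice in this sum and so divides by $2$ to get the sharper constant $(2t+1)k$, but this does not affect the $O(tk)$ conclusion.
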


\begin{proof}
Each valid pair $(P_{1},P_{2})$ is counted twice when $P_{1}$ is fixed and when $P_{2}$ is fixed. Thus there are $\frac{1}{2}\cdot (4t+2)\cdot k=(2t+1)k$
valid pairs in total.
\end{proof}

\begin{remark}\label{worst v.s. average}
	Since $k\leq n$, the estimate in the preceding corollary is $O(nt)$. We show that this bound cannot be substantially improved. Let $A_1,\dots,A_n$ be the $n$ vertices of a convex polygon in clockwise order, then $k=n$ in this case. Note that if the number of points strictly on one side of \emph{line} $A_{i}A_{j}$ is no more than $t$,  there would be at most $t$ other vertices between $A_{i}$ and $A_{j}$. The total number of valid pairs $\{A_{i}, A_{j}\}$ is exactly $n(t+1)=\Theta(kt)=\Theta(nt)$.
	
	However, the situation above where there is only one single convex layer is very rare. Usually, $k$ is much smaller than $n$. It is known that the expected convex hull size of $n$ points sampled from a general convex body in the plane is $O(n^{2/3})$~\cite{buchta2012boundary}. A direct corollary is that $\mathbb{E}k = O(t n^{2/3})$, much smaller than $n$ when $t\ll n^{1/3}$. Another example of $k\ll n$ is when $n$ points are randomly sampled from a component independent distribution on the plane. Under such conditions, the expected size of the $i$-th convex layer is proved to be $O(i^{2}\log(n/i))$~\cite{he2018maximal}. Consequently, $\mathbb{E}k = O(t^{3}\log(n/t))$. 
\end{remark}

The following observations are direct and simple. First, $P_3$, $P_4$ and $P_5$ are in the enclosed area of \emph{line} $P_1P_2$. Second, $P_3$ is among the $(t+1)$ farthest points from \emph{line} $P_1P_2$. Third, $P_4$ is among the $(t+1)$ leftmost points along the direction of \emph{ray} $P_1P_2$. Lastly, $P_5$ is among the $(t+1)$ rightmost points along the direction of \emph{ray} $P_1P_2$.
Now we prove Theorem~\ref{thm:main1}.

\begin{reptheorem}{thm:main1}[restated]
Let $n=|X|$. Algorithm~\ref{alg:main} gives the exact optimal solution to $\MER(X,t)$, running in $O(n^2\log{n}+ktn+kt^3)$ time. As $k\leq n$, it is also $O(n^2\log{n}+n^2t+nt^3)$.
\end{reptheorem}

\begin{proof}
The correctness is clear because we enumerate over all possible rectangles. We analyze the running time below.

First we consider Algorithm~\ref{alg:valid_pairs}, the subroutine to find all valid pairs. For $P_1$ and $P_0$, finding $L$ and $R$ of \emph{ray} $P_1P_0$ takes $O(n)$ time (Line~\ref{alg:valid_pairs:LR}) and sorting them in the clockwise order respectively takes $O(n\log n)$ time (Line~\ref{alg:valid_pairs:LRsort}). In the subsequent rotation of $P_1P$, updating $P_2$, the points on the left side of $P_1P$ and the points on the right side of $P_1P$ runs in $O(1)$ time if $L$ and $R$ are maintained by linked lists. Hence for each fixed $P_1$, Lines~\ref{alg:valid_pairs:loop_content_start}--\ref{alg:valid_pairs:loop_content_end} takes $O(n)$ time. By enumerating $P_1$, Algorithm~\ref{alg:valid_pairs} runs in a total time of $O(n^2\log n)$.

Next consider Algorithm~\ref{alg:para}. Finding the candidates for $P_3$, $P_4$ and $P_5$ by the selection algorithm in \cite{kiwiel2005floyd} takes $O(n)$ time respectively. Further sorting the candidates for $P_3$, $P_4$ and $P_5$ takes $O(t\log{t})$ time respectively. Enumerating over $P_4$ takes $O(t)$ time. From Line~\ref{alg:para:finding_start} to Line~\ref{alg:para:finding_end}, finding $P_3$ and $P_4$ such that the rectangle determined by $P_1$, $P_2$, $P_3$, $P_4$ and $P_5$ encloses $(n-t)$ or $(n-t+1)$ points takes $O(t)$ time. Hence, each call to Algorithm~\ref{alg:para} takes $O(n+t\log{t} + t^2) = O(n + t^2)$ time.
	
Now we return to the main algorithm (Algorithm~\ref{alg:main}). Finding all the valid pairs takes $O(n^2\log{n})$ time, and there are $kt$ of them. For each single valid pair $(P_1,P_2)$, we call the $\Call{enclose}{}$ function, which runs in $O(n + t^2)$ time. The total runtime of Algorithm~\ref{alg:main} is $O(n^2\log{n}+ktn+kt^3)$. As $k\leq n$, it is also $O(n^2\log{n}+n^2t+nt^3)$.
\end{proof}

\section{A Sampling Algorithm}
\label{sec:approximation}
Although the algorithm given in Section~\ref{sec:timealg} is faster than those in earlier studies, it is still not efficient for large-scale datasets. In this section, we shall present a sampling approach to obtain an approximate solution, which can further reduce the running time when $n$ is large. To illustrate the idea, we need to introduce the dual problem to the $\MER(X,t)$, the definition of he VC dimension and a few related results.

\subsection{Dual Problems}
\begin{definition}[Dual of \MER]
	Given a set $X$ of $n$ planar points and a positive value $\alpha$, we intend to find the maximum number of points $\kappa(X,\alpha)$ covered by any rectangle with area at most $\alpha$.
	\label{maximum number formulation}
\end{definition}

Next we explain the relation between the $\MER(X,t)$ problem and its dual problem, assuming the same planar point set. We retain the assumption of no three collinear points. Below are two corollaries of Lemma~\ref{lem:enclosed points number} to be used later.

\begin{lemma}
	\label{lem:dual relation}
	Let $\alpha = \opt(X,t)$. Then $\kappa(X,\alpha)\in\{n-t, n-t+1\}$.
\end{lemma}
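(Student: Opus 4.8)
The plan is to sandwich $\kappa(X,\alpha)$ between $n-t$ and $n-t+1$ by proving the two inequalities separately, with $\alpha=\opt(X,t)$ fixed throughout.

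First, for the lower bound $\kappa(X,\alpha)\ge n-t$, I would simply exhibit a feasible rectangle for the dual problem. The optimal solution $\sol^\ast(X,t)$ is a rectangle of area exactly $\opt(X,t)=\alpha$ that, by Definition~\ref{parallelogram formulation}, encloses at least $(n-t)$ points. Since its area is $\alpha\le\alpha$, it is admissible in the dual problem of Definition~\ref{maximum number formulation}, so $\kappa(X,\alpha)$, being the maximum over all rectangles of area at most $\alpha$, is at least the number of points this one encloses, i.e.\ at least $n-t$.

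Second, for the upper bound $\kappa(X,\alpha)\le n-t+1$, I would argue by contradiction. Suppose some rectangle $R$ of area at most $\alpha$ encloses at least $(n-t+2)$ points. If $\area(R)<\alpha$, then $R$ already covers at least $(n-t+2)\ge(n-t)$ points with area strictly below $\opt(X,t)$, contradicting the minimality of $\opt(X,t)$ immediately. If $\area(R)=\alpha$, I would reuse the shrinking step from the proof of Lemma~\ref{lem:enclosed points number}: since strictly more than $(n-t+1)$ points are enclosed, sliding one side of $R$ inward excludes at most $2$ points (by the no-three-collinear assumption), producing a strictly smaller rectangle that still encloses at least $(n-t)$ points and therefore has area below $\opt(X,t)$, again a contradiction. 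Combining the two bounds yields $n-t\le\kappa(X,\alpha)\le n-t+1$, which is the claim.

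The step requiring the most care is the upper bound in the case $\area(R)=\alpha$, i.e.\ the shrinking argument. One must verify that the inward move strictly decreases the area (the rectangle is non-degenerate) and that the no-three-collinear hypothesis is exactly what guarantees ``at most two points leave per inward move.'' Since this is the same hypothesis underlying Lemma~\ref{lem:enclosed points number}, the argument there transfers verbatim, and I would flag this dependence explicitly so that the appendix treatment of collinear degeneracies is understood to cover this lemma as well.
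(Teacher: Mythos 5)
Your proposal is correct and follows essentially the same route as the paper: the lower bound comes from exhibiting $\sol^\ast(X,t)$ itself as a feasible rectangle of area $\alpha$, and the upper bound comes from the same shrinking argument (move a side inward, losing at most two points under the no-three-collinear assumption) to contradict the optimality of $\opt(X,t)$. Your explicit case split on $\area(R)<\alpha$ versus $\area(R)=\alpha$ is a minor tidying of the paper's one-line contradiction, not a different argument.
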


\begin{proof}
	Notice that we already have a rectangle $\sol^\ast(X,t)$ of area $\alpha$ that encloses $(n-t)$ or $(n-t+1)$ points by Lemma~\ref{lem:enclosed points number}, therefore it is only possible that $\kappa(X,\alpha)\geq n-t$. On the other hand, if $\kappa(X,\alpha)\geq n-t+2$, we can move inwards a side on which there at most two points. There remain at least $(n-t)$ points while the area is smaller than $\alpha$, contradicting the area optimality of $\sol^\ast(X,t)$. Therefore $n-t\leq\kappa(X,\alpha)\leq n-t+1$.
\end{proof}

In the same spirit, we can prove the following corollary.
\begin{corollary}
	\label{coro:upper bound}
	For any rectangle with area no more than $\alpha=opt(X,t)$, the number of points it encloses can not exceed $n-t+1$.
\end{corollary}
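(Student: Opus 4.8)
The statement to prove is Corollary~\ref{coro:upper bound}: for any rectangle with area at most $\alpha = \opt(X,t)$, the number of enclosed points cannot exceed $n-t+1$.

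The plan is to argue by contradiction, mirroring the reasoning already used in the proof of Lemma~\ref{lem:dual relation}. Suppose for contradiction that some rectangle $R$ with $\area(R) \le \alpha$ encloses at least $n-t+2$ points. The goal is to shrink $R$ into a strictly smaller rectangle that still covers at least $n-t$ points, which contradicts the definition of $\opt(X,t)$ as the minimum area covering $(n-t)$ points.

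The key step is the shrinking argument. Since $R$ encloses at least $n-t+2$ points, I would pick one of the four sides and translate it inward (parallel to itself) until it first hits an enclosed point. Under the no-three-collinear assumption, moving a side inward past the outermost point in that direction removes at most two points from the interior (the generic case removes exactly one; two points can leave simultaneously only if the segment joining them is parallel to the side being moved). Thus after this single inward translation the rectangle encloses at least $(n-t+2) - 2 = n-t$ points, and its area has strictly decreased because one dimension shrank while the other is unchanged. This new rectangle covers at least $(n-t)$ points with area strictly less than $\alpha = \opt(X,t)$, contradicting the optimality of $\sol^\ast(X,t)$.

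I expect the main obstacle to be handling the boundary convention cleanly: one must confirm that after the inward move the remaining points are genuinely \emph{enclosed} (including those that now lie on the translated side) and that the "at most two points leave" bound is tight given the collinearity assumption. This is exactly the bookkeeping already invoked in Lemma~\ref{lem:enclosed points number} and Lemma~\ref{lem:dual relation}, so the honest way to write it is to note that the argument is identical in spirit to the second half of the proof of Lemma~\ref{lem:dual relation} --- the only difference being that here $\alpha$ is fixed as $\opt(X,t)$ and we draw the contradiction directly from area optimality rather than from a bound on $\kappa(X,\alpha)$. Consequently the corollary follows ``in the same spirit,'' as the surrounding text already signals.
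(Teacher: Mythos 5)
Your proposal is correct and matches the paper's approach: the paper proves this corollary only by remarking it follows ``in the same spirit'' as Lemma~\ref{lem:dual relation}, whose proof is exactly your contradiction argument (a rectangle of area at most $\alpha$ enclosing $n-t+2$ or more points can have a side moved inward, losing at most two points under the no-three-collinear assumption, yielding a strictly smaller rectangle covering at least $n-t$ points and contradicting the optimality of $\opt(X,t)$). Your bookkeeping of the boundary convention is consistent with what the paper implicitly assumes, so nothing further is needed.
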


\subsection{VC Dimension of $(X,\mathcal{R})$}
\label{sec:VC}
We take the definitions of the range space and VC dimension from \cite{har2010complexity}. 

\begin{definition}{\cite[Definition 20.1.1]{har2010complexity}}
\label{def:range}
A range space is a pair $(X,\mathcal{R})$, where $X$ is a ground set and $\mathcal{R}$ is collection of subsets of $X$. The elements of $X$ are points and the elements of $\mathcal{R}$ are ranges.
\end{definition}

\begin{definition}{\cite[Definition 20.1.4]{har2010complexity}}
\label{def:shattered}
For a range space $(X,\mathcal{R})$ and a subset $Y\subseteq X$, the projection of the range space on $Y$ is defined to be $\{\tau\cap Y|\tau\in\mathcal{R}\}$. If the projection $\{\tau\cap Y|\tau\in\mathcal{R}\}$ is the power set of $Y$, then we say $Y$ is shattered by $(X,\mathcal{R})$.
\end{definition}

\begin{definition}
\label{def:VC}
The VC dimension of a range space $(X,\mathcal{R})$ is the maximum cardinality of a shattered subset of $X$.
\end{definition}

In our case, $X$ is the given point set. And each element in $\mathcal{R}$ refers to the set of points in a rectangle, including the vertices and those on the four edges. We need the following lemma to bound the VC dimension of $(X,\mathcal{R})$.

\begin{lemma}
	\label{enclosed parts}
	For a convex 10-gon and a rectangle, there are at most 8 intersection points and at most 4 continuous parts of the boundary of the 10-gon lying inside the rectangle.
\end{lemma}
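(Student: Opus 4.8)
The plan is to bound the two quantities by reasoning about how the boundary of a convex polygon can cross the boundary of a rectangle. First I would recall the standard fact that a convex curve (here, the boundary of the convex $10$-gon) meets any line in at most two points. Since a rectangle has four sides, each lying on a line, the boundary of the $10$-gon can intersect each of the four bounding lines in at most two points; restricting to the actual segments (the sides of the rectangle) can only reduce this count. This immediately gives at most $4\cdot 2 = 8$ intersection points between the two boundaries, establishing the first claim.

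For the second claim, I would argue that the portions of the $10$-gon's boundary lying strictly inside the rectangle are arcs delimited by the intersection points. As we traverse the closed boundary of the $10$-gon, each maximal ``continuous part lying inside the rectangle'' must begin and end on the rectangle's boundary, i.e.\ at intersection points — except in the degenerate case where the entire $10$-gon lies inside, which yields a single part and is harmless. Each such interior arc therefore consumes two intersection points (one where it enters the rectangle, one where it exits), and these entry/exit points are distinct across different arcs as we go around the polygon. With at most $8$ intersection points available and two consumed per interior arc, there can be at most $\lfloor 8/2\rfloor = 4$ continuous interior parts.

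The step I expect to require the most care is the bookkeeping that matches intersection points to interior arcs, specifically ruling out pathologies: an interior arc that touches the rectangle boundary tangentially without crossing, a vertex of the $10$-gon or of the rectangle sitting exactly at an intersection, or the boundary of the $10$-gon running along a side of the rectangle. Under the paper's standing assumption of general position (no three collinear points, and hence no coincident edges or tangencies of the relevant kind), these degeneracies do not arise, so each interior arc corresponds cleanly to a transversal entry and a transversal exit. The key inequality to make precise is simply that the number of interior arcs is at most half the number of boundary crossings, so the $8$-point bound forces at most $4$ interior parts.

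The main obstacle is therefore not the counting itself but justifying that the general-position assumption legitimately excludes the tangential and collinear cases; once that is granted, both bounds follow from the convex-curve-meets-line principle and the entry/exit pairing.
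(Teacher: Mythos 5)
Your proposal is correct and follows essentially the same route as the paper's proof: both bound the intersections by two per side of the rectangle (convexity of the $10$-gon) to get at most $8$, and then observe that interior arcs are delimited by pairs of intersection points (the paper phrases this as alternation of inside/outside arcs between consecutive intersection points), yielding at most $8/2=4$ interior parts. Your extra care about tangencies and degenerate incidences is a reasonable refinement that the paper leaves implicit, but it does not change the argument.
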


\begin{proof}
	As the 10-gon is convex, there are at most two intersection points on each side of the rectangle. Besides, if $I_{1}$, $I_{2}$ and $I_{3}$ are three continuous intersection points on the sides of the rectangle, then one of polygon boundary parts $I_{1}I_{2}$ and $I_{2}I_{3}$ must be inside the rectangle and the other must be outside. As there are at most 8 intersection points, the number of parts of the polygon inside the rectangle is at most 4.
\end{proof}

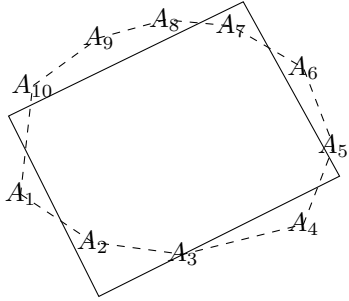
\begin{figure}[!t]
	\centering
	\begin{tikzpicture}[scale=0.8]
		\centering
		\node at (-0.7,-0.3) {$A_1$};
		\node at (0.5,-1.1) {$A_2$};
		\node at (2, -1.3) {$A_3$};
		\node at (4, -0.8) {$A_4$};
		\node at (4.5, 0.5) {$A_5$};
		\node at (4, 1.8) {$A_6$};
		\node at (2.8, 2.5) {$A_7$};
		\node at (1.7, 2.6) {$A_8$};
		\node at (0.6, 2.3) {$A_9$};
		\node at (-0.5, 1.5) {$A_{10}$};
		\draw[dashed]  (-0.7,-0.3) -- (0.5,-1.1) -- (2, -1.3) -- (4, -0.8) -- (4.5, 0.5) -- (4, 1.8) -- (2.8, 2.5) -- (1.7, 2.6) -- (0.6, 2.3) -- (-0.5, 1.5) --cycle;
		\draw (0.6, -2) -- (4.6, 0) -- (3, 2.9) -- (-0.9, 1) --cycle;
	\end{tikzpicture}
	\caption{A convex $10$-gon. A rectangle can intersect the $10$-gon at no more than $8$ points and thus only enclose four continuous sections of the boundary.}
	\label{fig:10gon}
\end{figure}

\begin{lemma}
	\label{VC dimension}
	The VC dimension of the range space $(X, \mathcal{R})$ is at most 9.
\end{lemma}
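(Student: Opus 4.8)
The plan is to show that no set $Y\subseteq X$ of $10$ points can be shattered by $(X,\mathcal{R})$; since the VC dimension is the largest cardinality of a shattered set, this immediately gives the bound of $9$. I would split the argument according to whether the $10$ points of $Y$ are in convex position.

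First suppose $Y$ is in convex position, so that its points, listed in cyclic order as $A_1,\dots,A_{10}$, form a convex $10$-gon. For any rectangle $\tau$, Lemma~\ref{enclosed parts} shows that the boundary of this $10$-gon meets $\tau$ in at most $8$ points and that at most $4$ maximal continuous arcs of the boundary lie inside $\tau$. Consequently the enclosed vertices $Y\cap\tau$ form at most $4$ contiguous runs in the cyclic order $A_1,\dots,A_{10}$: whenever an excluded vertex separates two included vertices, the boundary must leave $\tau$ between them, splitting the inside portion into distinct arcs. Now consider the alternating subset $S=\{A_1,A_3,A_5,A_7,A_9\}$. Realizing $S$ as $Y\cap\tau$ would require the five isolated vertices, each flanked by two excluded neighbours, to lie in five separate inside arcs, contradicting the bound of $4$. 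Hence $S$ is not of the form $Y\cap\tau$, so $Y$ is not shattered.

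Next suppose $Y$ is not in convex position. Under the standing assumption that no three points are collinear, some point $p\in Y$ lies strictly inside the convex hull of $Y\setminus\{p\}$. To shatter $Y$ we would in particular need a rectangle $\tau$ with $Y\cap\tau = Y\setminus\{p\}$, that is, one that contains every point of $Y$ except $p$. But a rectangle is convex, so containing all of $Y\setminus\{p\}$ forces it to contain their convex hull and hence $p$ itself, a contradiction. Thus the subset $Y\setminus\{p\}$ is not realizable and $Y$ is again not shattered.

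Combining the two cases, every $10$-point subset fails to be shattered, so the VC dimension of $(X,\mathcal{R})$ is at most $9$. I expect the delicate step to be the convex-position case, specifically the translation from Lemma~\ref{enclosed parts}'s count of continuous boundary arcs into the combinatorial claim that the enclosed vertices form at most four cyclic runs; once that is established, the alternating subset $S$ is the natural witness that exceeds this bound.
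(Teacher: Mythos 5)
Your proof is correct and follows essentially the same route as the paper: the convex-position case uses Lemma~\ref{enclosed parts} and the alternating subset $\{A_1,A_3,A_5,A_7,A_9\}$ exactly as the paper does (your elaboration of how the four-arc bound yields the four-cyclic-run bound is a welcome addition), and your non-convex case is the same convexity observation the paper phrases via convex layers. The only difference is that the paper also treats three collinear points as an explicit third case so that the bound survives dropping the no-collinearity assumption (which a later remark relies on), whereas you invoke that assumption; note that your hull argument works verbatim there too if you replace ``strictly inside'' by ``contained in the convex hull of $Y\setminus\{p\}$.''
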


\begin{proof}
	For any given 10 points $A_{1},A_2,\dots,A_{10}$, there are three cases considering their relative positions.
	
	The first case is that $A_{1},A_2,\dots,A_{10}$ are the vertices of a convex 10-gon. Without loss of generality, we suppose that they are in the clockwise order (See Figure~\ref{fig:10gon}). By Lemma~\ref{enclosed parts}, there are at most 4 discontinuous parts of the convex 10-gon boundaary being enclosed by an arbitrary rectangle. Therefore it is impossible for any rectangle to encloses exactly $A_{1}$, $A_{3}$, $A_{5}$, $A_{7}$ and $A_{9}$, otherwise there would be 5 discontinuous parts of the convex 10-gon enclosed by the rectangle.
	
	The second case is that the ten points are not in the position of a convex 10-gon and no three of them are collinear. Then there are at least two convex layers in the onion structure \cite{chazelle1985convex}. It is not possible for any rectangle to contain the outermost layer without containing the inner ones.
	
	The third case is that there are three collinear points $A_{1}$, $A_{2}$ and $A_{3}$. Without loss of generality, we assume $A_{2}$ lie between $A_{1}$ and $A_{3}$. Then it is not possible for a rectangle to enclose $P_{1}$ and $P_{3}$ while not enclosing $P_{2}$.
	
	In any of the three cases, there is at least one subset that can not be the intersection of any rectangle and $\{A_{1},A_2,\dots,A_{10}\}$. It follows from Definition~\ref{def:shattered} and~\ref{def:VC} that the maximum cardinality of any shattered subset can not be 10 or larger in any of the three cases. The VC dimension is therefore at most 9.
\end{proof}

\subsection{Approximation Guarantees}
We next introduce a sampling result for the dual problem of \MER. The following is a discrepancy result for rectangles, slightly modified from an earlier result~\cite[Lemma 9]{de2016covering} for rectangles. We shall discuss its proof at the end of this section.

\begin{lemma}
	\label{sampling}
	Let $s=\min\big(n, \frac{c}{\epsilon^{2}}\frac{n\log{n}}{\kappa(X,\alpha)}\big)$, where $c$ is some absolute constant, and $S$ be a random sample of $X$ with $s$ points. Then with probability at least $(1-1/n)$, it holds for each rectangle $P$ of area at most $\alpha$ that
	$\left| |P\cap X|/n - |P\cap S|/s \right|\leq \epsilon\cdot \kappa(X,\alpha) / n$.
\end{lemma}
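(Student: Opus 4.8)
The plan is to recognize Lemma~\ref{sampling} as an instance of the relative $(\nu,\epsilon)$-approximation guarantee for range spaces of bounded VC dimension, specialized to the scale $\nu := \kappa(X,\alpha)/n$. First I would restrict the range space $(X,\mathcal{R})$ to the subfamily $\mathcal{R}_\alpha\subseteq\mathcal{R}$ consisting of the ranges induced by rectangles of area at most $\alpha$. Restricting the collection of ranges can only decrease the VC dimension, so by Lemma~\ref{VC dimension} the range space $(X,\mathcal{R}_\alpha)$ still has VC dimension at most $9$. The crucial structural fact is that, by the very definition of $\kappa(X,\alpha)$, every range $r\in\mathcal{R}_\alpha$ satisfies $|r\cap X|\le\kappa(X,\alpha)$, i.e.\ its relative measure $\mu(r)=|r\cap X|/n$ is at most $\nu=\kappa(X,\alpha)/n$. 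Thus we only need a sampling bound that is uniform over ranges whose measure never exceeds $\nu$.

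Second, I would invoke the standard relative-approximation theorem for bounded-VC-dimension range spaces (the relative $(\nu,\epsilon)$-approximation / sensitive-sampling bounds of Li--Long--Srinivasan and of Har-Peled--Sharir): for a range space of VC dimension $d$, a uniform sample $S$ of size $s=O\!\big(\tfrac{1}{\epsilon^{2}\nu}(d\log\tfrac1\nu+\log\tfrac1\delta)\big)$ is, with probability at least $1-\delta$, a relative $(\nu,\epsilon)$-approximation, meaning $\big|\mu(r)-|r\cap S|/s\big|\le\epsilon\max(\mu(r),\nu)$ for every range $r$. Setting $\delta=1/n$, $d\le 9$, and $\nu=\kappa(X,\alpha)/n$ yields $s=O\!\big(\tfrac{n}{\epsilon^{2}\kappa(X,\alpha)}(d\log(n/\kappa(X,\alpha))+\log n)\big)=O\!\big(\tfrac{n\log n}{\epsilon^{2}\kappa(X,\alpha)}\big)$, which is exactly the stated sample size once the constant $d$ and the term $\log(n/\kappa(X,\alpha))\le\log n$ (valid since $\kappa(X,\alpha)\ge 1$) are absorbed into the absolute constant $c$.

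Third, I would combine the two observations. Because every rectangle $P$ of area at most $\alpha$ induces a range in $\mathcal{R}_\alpha$ with $\mu(P)\le\nu$, the relative-approximation guarantee collapses the right-hand side: $\max(\mu(P),\nu)=\nu=\kappa(X,\alpha)/n$, whence $\big||P\cap X|/n-|P\cap S|/s\big|\le\epsilon\,\kappa(X,\alpha)/n$ simultaneously for all such $P$, with probability at least $1-1/n$. The degenerate branch $s=n$ of the minimum is handled separately and trivially: there $S=X$ and the left-hand side is identically zero.

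The main obstacle is matching the precise quantitative form. The available theorems are phrased either as relative approximations with the $\max(\mu(r),\nu)$ scaling or as discrepancy bounds, and one must verify that the sample-size expression in the lemma, which hides $d$ and $\log(1/\nu)$ inside the single factor $\log n$ and the constant $c$, is genuinely implied; this uses only $d=O(1)$ and $1\le\kappa(X,\alpha)\le n$. A secondary point requiring care is that the relative-approximation machinery bounds the supremum over the \emph{infinite} family of area-$\le\alpha$ rectangles in one shot, through the growth function of $(X,\mathcal{R}_\alpha)$, so no separate union bound over rectangles is needed beyond the VC-dimension input supplied by Lemma~\ref{VC dimension}.
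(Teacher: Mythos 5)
Your proposal is correct and follows essentially the same route as the paper: the paper's proof likewise reduces the lemma to the relative $(\nu,\epsilon)$-approximation theorem for constant-VC-dimension range spaces (citing Har-Peled--Sharir and the analogous axis-parallel result of de Berg et al.) combined with the VC-dimension bound of $9$ from Lemma~\ref{VC dimension}. You simply spell out the specialization $\nu=\kappa(X,\alpha)/n$ and the collapse of $\max(\mu(P),\nu)$ to $\nu$ more explicitly than the paper does.
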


\begin{proof}
The proof is almost identical to the case in \cite[Lemma 9]{de2016covering}, except that they concern the axis-aligned rectangle while the rectangle here is of arbitrary orientation. The proof in fact works for a general range space (see \cite[Theorem 2.11]{har2011relative}) with a sample size $|S|$ of the same order, provided that the VC dimension is a constant. Indeed the VC dimension of $(X, \mathcal{R})$ is at most 9 as stated in Lemma~\ref{VC dimension}.
\end{proof}

Now we show our main result of approximation.
\begin{lemma}
	\label{lem:approximation}
	Let $n=|X|$ and assume that $t \leq n/2$. Let $S$ be a random sample of $X$ such that $|S| =\min(n, c\epsilon^{-2}\log n)$, where $c$ is an absolute constant. Moreover, suppose that $t^\prime=(\frac{t}{n} + \epsilon - \frac{\epsilon t}{n})|S|$ is an integer. It holds with probability at least $1-3/n$ that $(n-t)(1-\epsilon)\leq\left|\sol^\ast(S,t^\prime)\cap X\right|\leq n-t+1$.
\end{lemma}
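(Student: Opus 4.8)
The plan is to move information back and forth between the sample $S$ and the full set $X$ through the uniform deviation bound of Lemma~\ref{sampling}, instantiated at $\alpha=\opt(X,t)$. First I would verify that the prescribed sample size is admissible: since $t\le n/2$, Lemma~\ref{lem:dual relation} gives $\kappa(X,\alpha)\ge n-t\ge n/2$, so the threshold $\frac{c}{\epsilon^{2}}\frac{n\log n}{\kappa(X,\alpha)}$ in Lemma~\ref{sampling} is at most $\frac{2c}{\epsilon^{2}}\log n$; hence the $\kappa$-independent choice $|S|=\min(n,c'\epsilon^{-2}\log n)$ suffices after adjusting the constant. Conditioning on the good event of Lemma~\ref{sampling}, every rectangle $P$ of area at most $\alpha$ satisfies $\bigl||P\cap X|/n-|P\cap S|/s\bigr|\le \epsilon\,\kappa(X,\alpha)/n$, where $s=|S|$, and this single event drives both directions.

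For the upper bound I would first show that $\sol^\ast(S,t')$ has area at most $\alpha$. Let $R$ be a rectangle of area at most $\alpha$ covering $\kappa(X,\alpha)$ points of $X$ (a dual optimum). The deviation bound yields $|R\cap S|/s\ge (1-\epsilon)\kappa(X,\alpha)/n\ge(1-\epsilon)(n-t)/n$. The definition of $t'$ is engineered exactly so that $(s-t')/s=(1-t/n)(1-\epsilon)=(1-\epsilon)(n-t)/n$; therefore $R$ covers at least $(s-t')$ points of $S$, is feasible for $\MER(S,t')$, and so $\area(\sol^\ast(S,t'))\le\area(R)\le\alpha$. Once the area is at most $\alpha=\opt(X,t)$, Corollary~\ref{coro:upper bound} gives $|\sol^\ast(S,t')\cap X|\le n-t+1$ deterministically, with no further appeal to randomness.

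For the lower bound I would use that $\sol^\ast(S,t')$, being feasible on $S$, covers at least $(s-t')$ points of $S$, so $|\sol^\ast(S,t')\cap S|/s\ge(1-\epsilon)(n-t)/n$. Its area is at most $\alpha$, so the same deviation bound transfers this to $X$, giving $|\sol^\ast(S,t')\cap X|\ge (1-\epsilon)(n-t)-\epsilon\,\kappa(X,\alpha)$. I would then union-bound the $O(1)$ high-probability events actually invoked (the guarantee of Lemma~\ref{sampling} together with the accompanying concentration estimates), each failing with probability at most $1/n$, to reach success probability at least $1-3/n$.

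The hard part is the bookkeeping that forces a single parameter $t'$ to serve both directions. The feasibility step pins $t'=s\bigl(t/n+\epsilon(1-t/n)\bigr)$ and makes the area bound tight when $\kappa(X,\alpha)=n-t$; but with this very $t'$ the lower bound above reads $(1-\epsilon)(n-t)-\epsilon\,\kappa(X,\alpha)$ rather than the advertised $(1-\epsilon)(n-t)$, the surplus being $\epsilon\,\kappa(X,\alpha)\le\epsilon(n-t+1)$. Closing this gap is the crux: since rescaling the sampling parameter alone cannot remove the loss while $t'$ is held fixed, I would interpret the target guarantee up to a constant factor in $\epsilon$ (equivalently, run Lemma~\ref{sampling} and define $t'$ with $\epsilon/2$, so the two $\epsilon$-losses together still fit inside a single factor $(1-\epsilon)$), while carefully tracking the integrality hypothesis on $t'$ to keep all counts exact.
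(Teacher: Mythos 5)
Your proposal is correct and follows essentially the same route as the paper's proof: instantiate Lemma~\ref{sampling} at $\alpha=\opt(X,t)$, use a witness rectangle to show $\opt(S,t')\le\alpha$ (the paper routes this through the dual optimum $p^\ast$ on $S$, you go directly via feasibility, which is a cosmetic difference), invoke Corollary~\ref{coro:upper bound} for the upper bound, transfer the count of $\sol^\ast(S,t')$ back to $X$ for the lower bound, and absorb the residual $\epsilon\,\kappa(X,\alpha)$ loss by a constant rescaling of $\epsilon$, exactly as the paper does in its final step.
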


\begin{proof}
	Let $n=|X|$. The rectangle $P^\ast = \sol^\ast(X,t)$ has area $\alpha = \opt(X,t)$ and encloses the most number of points of $X$. By Corollary~\ref{lem:dual relation} it must hold that
	\[n-t\leq|P^\ast\cap X|=\kappa_{\alpha}(X)\leq n-t+1.\]
	Applying Lemma~\ref{sampling} to $P$, we conclude that when
	\[
	|S| = s=\min\left(n,\frac{c}{\epsilon^{2}}\frac{n\log{n}}{\kappa_{\alpha}(X)}\right)\leq\min\left(n, \frac{2c\log{n}}{\epsilon^{2}}\right),
	\]
	with probability at least $1-\frac{1}{n}$,
	\[\frac{|P^\ast \cap S|}{|S|}\geq\frac{|P^\ast\cap X|}{|X|}-\frac{|P^\ast\cap X|}{|X|}\cdot\epsilon=\frac{\kappa_{\alpha}(X)}{n}\cdot(1-\epsilon),\]
	therefore
	\[|P^\ast \cap S|\geq\frac{\kappa_{\alpha}(X)}{n}\cdot(1-\epsilon)s.\]
	Next we let $p^\ast$ denote the rectangle of area at most $\alpha$ that encloses the most number of points of $S$. Then by Corollary~\ref{coro:upper bound},
	\[|p^\ast\cap X|\leq n-t+1.\]
	On the other hand, by the definition of $p^\ast$ we have
	\[|p^\ast\cap S|\geq|P^\ast \cap S|\geq\frac{\kappa_{\alpha}(X)}{n}\cdot (1-\epsilon)s.\]
	Since $\sol^\ast(S,t^\prime)$ is the smallest rectangle enclosing at least $s-t^\prime=\left(1-\frac{t}{n}\right)(1-\epsilon)s$
	points of $S$ and $p^\ast$ encloses at least \[\frac{\kappa_{\alpha}(X)}{n}\cdot(1-\epsilon)s\geq\left(1-\frac{t}{n}\right)(1-\epsilon)s\]
	points of $S$, it must hold that 
	\[\opt(S,t^\prime) \leq \area(p^\ast) \leq \alpha.\]
	Then we can apply Lemma~\ref{sampling} to $\sol^\ast(S,t^\prime)$ as its area is no more than $\alpha$. Overall with probability at least $(1-3/n)$, we have \[\left|\sol^\ast(S,t^\prime)\cap X\right|\geq (n-t)(1-2\epsilon)-\epsilon.\]
	On the other hand, by Corollary~\ref{coro:upper bound}, we conclude that \[\left|\sol^\ast(S,t^\prime)\cap X\right|\leq n-t+1.\]
	Replacing \(\epsilon\) by some \(O(\epsilon)\) with a constant scaling, then \([(n-t)(1-2\epsilon)-\epsilon]\) can be rewritten as \((n-t)(1-\epsilon)\).
\end{proof}

\begin{remark}
	Removing the assumption of no three collinear points, we see that it continues to hold $\kappa_\alpha(X)\geq n-t$ and the proof of Lemma~\ref{lem:approximation} still goes through, that is, it continues to hold that $\left|\sol^\ast(S,t^\prime)\cap X\right|\geq (n-t)(1-\epsilon)$.
\end{remark}

\begin{remark}
	The left inequality (i.e.\@ the lower bound) continues to hold without the assumption of three collinear points. See Subsection~\ref{sec:VC} for discussion.
\end{remark}

In the light of Lemma~\ref{lem:approximation}, we can apply the algorithm in Section~\ref{sec:timealg} to $S$ and obtain an approximate solution.

\begin{reptheorem}{thm:main2}[restated]
	There is a sampling algorithm which, given a set $X$ of $n$ points, with probability at least $(1-3/n)$, finds a rectangle of area at most $\opt(X,t)$ such that the number of enclosed points is between $(n-t)(1-\epsilon)$ and $(n-t+1)$, in time $O\big((\frac{\log{n}}{\epsilon^{2}})^4(\frac{t}{n}+\epsilon)^3+n\big)$.
\end{reptheorem}

\begin{proof}
	The number of enclosed points is guaranteed by Lemma~\ref{lem:approximation} with a rescale of $\epsilon$. Next we analyze the runtime.
	
	First of all, sampling $s$ points takes $O(n)$ time, where $s= O(\frac{\log{n}}{\epsilon^{2}})$. Next, we aim to solve the problem $\MER(S,t^\prime)$ where $t^\prime = (\epsilon+\frac{t}{n}-\frac{\epsilon t}{n})s$, for which we apply Theorem~\ref{thm:main1} and see that the runtime is
	
	\begin{align*}
	&\quad\ O\left(s{t^\prime}^3+s^2t^\prime+s^2\log{s}\right)\\
	&= O\left(\left(\frac{\log{n}}{\epsilon^2}\right)^4 \left(\frac{t}{n}+\epsilon\right)^3
	+\left(\frac{\log{n}}{\epsilon^2}\right)^3\left(\frac{t}{n}+\epsilon\right)
	+\left(\frac{\log{n}}{\epsilon^2}\right)^2\log{\frac{\log{n}}{\epsilon^2}}\right)\\
	&= O\left(\left(\frac{\log{n}}{\epsilon^2}\right)^4 \left(\frac{t}{n}+\epsilon\right)^3\right).\\
	\end{align*}
	The last equation is because
	\[\frac{\left(\frac{\log{n}}{\epsilon^2}\right)^3\left(\frac{t}{n}+\epsilon\right)}{\left(\frac{\log{n}}{\epsilon^2}\right)^2\log{\frac{\log{n}}{\epsilon^2}}}
	=\frac{\frac{\log{n}}{\epsilon^2}\left(\frac{t}{n}+\epsilon\right)}{\log{\frac{\log{n}}{\epsilon^2}}}
	>\frac{\frac{\log{n}}{\epsilon}}{\log{\frac{\log{n}}{\epsilon^2}}}
	=\frac{\frac{\log{n}}{\epsilon}}{O\left(\log{\frac{\log{n}}{\epsilon}}\right)}
	>\Omega(1)
	\]
	and
	\[\frac{\left(\frac{\log{n}}{\epsilon^2}\right)^4 \left(\frac{t}{n}+\epsilon\right)^3}{\left(\frac{\log{n}}{\epsilon^2}\right)^2\log{\frac{\log{n}}{\epsilon^2}}}
	=\frac{\left(\frac{\log{n}}{\epsilon^2}\right)^2 \left(\frac{t}{n}+\epsilon\right)^3}{\log{\frac{\log{n}}{\epsilon^2}}}
	>\frac{\log{n}\cdot\frac{\log{n}}{\epsilon}}{O\left(\log{\frac{\log{n}}{\epsilon}}\right)}>\Omega(1)\]
	Therefore the total runtime is $O\big((\frac{\log{n}}{\epsilon^{2}})^4(\frac{t}{n}+\epsilon)^3+n\big)$.
\end{proof}

\bibliographystyle{elsarticle-num}
\bibliography{reference}

\begin{appendices}
\section{Removing the Assumption of No Three Collinear Points}\label{sec:three_points}

Now we remove the assumption that no three points are collinear. We shall highlight the changes to the algorithm instead of rewriting the pseudocodes. The main change is in the function of $\textsc{Valid-Pairs}$. Inside of having only two points $P_1$ and $P_2$ on the rotating ray, there could be more points on the line $P_1P_2$. Suppose that 
\[L = (A_1,A_2,\dots,A_s,A_{s+1},\dots),\quad R=(B_1,B_2,\dots,B_r,B_{r+1},\dots),\]
where $A_1,\dots,A_s$ and $B_1,\dots,B_r$ are collinear with $P_1$. We shall append the current boundary points (points on the rotating ray) to $L$ and/or $R$ and remove $A_1,\dots,A_s$ from $L$ and $B_1,\dots,B_r$ from $R$, yielding the new left and right sides
\[L' = (A_{s+1},A_{s+2},\dots),\quad R'=(B_{r+1},B_{r+2},\dots).
\]
If one of them contains at most $(t+2)$ points, we shall create a new valid pair $(P_1,A_i)$ for every $i\leq s$ and $(P_1,B_j)$ for every $j\leq r$. Continuing rotating the ray $P_1P$ to the next stopping position, we shall append $(A_1,\dots,A_s)$ to $R'$ and $(B_1,\dots,B_r)$ to $L'$.

Another change concerns finding $H$ consisting of farthest point from $P_1P_2$. Originally we need only to keep $(t+2)$ points, since it will include all points that are at least farther than the $(t+1)$-st farthest point. Without the assumption that no three points are collinear, we need to include all points that are at least farther than the $(t+1)$-st farthest point. Hence we can again maintain a min-heap of size $(t+1)$, so that scanning through all points we obtain a  correct $(t+1)$-st farthest point to $P_1P_2$. Then we need an additional $O(n)$ scan over all points to include all points of the same distance away from $P_1P_2$ as the $(t+1)$-st farthest point. We note that the time complexity of $\textsc{Valid-Pairs}$ remains the same up to a constant.

The last change is inside the function $\textsc{Para}$, instead of checking $J_i$ and $J_{i+1}$ for $Q_3$, we shall need to check all points in $J$ of the same distance as $J_{i+1}$ from the line $P_1P_2$.

We remark that it is no longer true there are at most $O(nt)$ valid pairs for each $P_1$. In the worst case, if $\Theta(n)$ points are collinear, there could be $\Theta(n^2)$ valid pairs. For real-world data, however, this hardly happens. As shown in Remark~\ref{worst v.s. average}, the average number of valid pairs is significantly smaller than the worst case bound.
\section{The Definition of Convex Layers }\label{sec::appendix}
\begin{definition}[\cite{dalal2004counting}]
	Given a finite Euclidean point set $X$, the first convex layer is defined to be the convex hull of $X$. And the $t$-th convex layer is defined to be the convex hull of the rest of $X$, after the points on the first $(t-1)$ convex layers are removed. The collection of these convex layers is called the onion of $X$. And the size of this collection is defined as the convex depth of $X$. When there is no remaining point after those on the first ${(t-1)}$ convex layers are removed, the $t$-th and subsequent convex layers are simply the empty set by definition.
\end{definition}
\end{appendices}
\end{document}